\newcommand{\evenhead}{AuthorNameForHeading \ name}
\newcommand{\oddhead}{ShortArticleName \ name}
\newcommand{\theArticleName}{Article name}
\newcommand{\FirstPageHeading}[1]{\thispagestyle{empty}%
\noindent\raisebox{0pt}[0pt][0pt]{\makebox[\textwidth]{\protect\footnotesize \sf }}\par}
\newcommand{\ArticleName}[1]{\renewcommand{\theArticleName}{#1}\vspace{-2mm}\par\noindent {\LARGE\bf  #1\par}}
\newcommand{\Author}[1]{\vspace{5mm}\par\noindent {\Large  #1\par} \par\vspace{2mm}\par}
\newcommand{\Address}[1]{\vspace{2mm}\par\noindent {\it #1} \par}
\newcommand{\Abstract}[1]{\vspace{6mm}\par\noindent\hspace*{10mm}
\parbox{140mm}{\small {\bf Abstract.} #1}\par}
\newcommand{\Keywords}[1]{\vspace{3mm}\par\noindent\hspace*{10mm}
\parbox{140mm}{\small {\bf Key words:} \rm #1}\par}
\newcommand{\Classification}[1]{\vspace{3mm}\par\noindent\hspace*{10mm}
\parbox{140mm}{\small {\it 2000 Mathematics Subject Classification:} \rm #1}\vspace{3mm}\par}
\newcommand{\ShortArticleName}[1]{\renewcommand{\oddhead}{#1}}
\newcommand{\AuthorNameForHeading}[1]{\renewcommand{\evenhead}{#1}}
\long\def\@makecaption#1#2{
  \sbox\@tempboxa{\small \textbf{#1.}\ \ #2}%
  \ifdim \wd\@tempboxa >\hsize
    {\small \textbf{#1.}\ \ #2}\par \else
    \global \@minipagefalse
    \hb@xt@\hsize{\hfil\box\@tempboxa\hfil}%
  \fi \vskip\belowcaptionskip}
\def\numberwithin#1#2{\@ifundefined{c@#1}{\@nocounterr{#1}}{%
  \@ifundefined{c@#2}{\@nocnterr{#2}}{%
  \@addtoreset{#1}{#2}%
  \toks@\@xp\@xp\@xp{\csname the#1\endcsname}%
  \@xp\xdef\csname the#1\endcsname
    {\@xp\@nx\csname the#2\endcsname.\the\toks@}}}}
\newtheorem{theorem}{Theorem}
\theoremstyle{definition}
\begin{document}

\FirstPageHeading{V.I. Gerasimenko and D.O. Polishchuk}

\ShortArticleName{Evolution of marginal correlation operators}

\AuthorNameForHeading{V.I. Gerasimenko, D.O. Polishchuk}

\ArticleName{On Evolution Equations for Marginal \\ Correlation Operators}

\Author{V.I. Gerasimenko$^\ast$\footnote{E-mail: \emph{gerasym@imath.kiev.ua}}
        and D.O. Polishchuk$^\ast$$^\ast$\footnote{E-mail: \emph{polishuk.denis@gmail.com}}}

\Address{$^\ast$\hspace*{1mm}Institute of Mathematics of NAS of Ukraine,\\
         \hspace*{3mm}3, Tereshchenkivs'ka Str.,\\
         \hspace*{3mm}01601 Kyiv-4, Ukraine}

\Address{$^\ast$$^\ast$Taras Shevchenko National University of Kyiv,\\
    \hspace*{4mm}Department of Mechanics and Mathematics,\\
    \hspace*{4mm}2, Academician Glushkov Av.,\\
    \hspace*{4mm}03187 Kyiv, Ukraine}

\bigskip

\Abstract{
This paper is devoted to the problem of the description of nonequilibrium correlations in quantum
many-particle systems. The nonlinear quantum BBGKY hierarchy for marginal correlation operators is
rigorously derived from the von Neumann hierarchy for correlation operators that give an alternative
approach to the description of states in comparison with the density operators. A nonperturbative
solution of the Cauchy problem of the nonlinear quantum BBGKY hierarchy for marginal correlation
operators is constructed.
}

\bigskip

\Keywords{nonlinear quantum BBGKY hierarchy; von Neumann hierarchy;
          correlation operator; density matrix; quantum many-particle system.}
\vspace{2pc}
\Classification{35Q40; 47J35; 47H20; 82C10; 82C22.}

\makeatletter
\renewcommand{\@evenhead}{
\hspace*{-3pt}\raisebox{-15pt}[\headheight][0pt]{\vbox{\hbox to \textwidth {\thepage \hfil \evenhead}\vskip4pt \hrule}}}
\renewcommand{\@oddhead}{
\hspace*{-3pt}\raisebox{-15pt}[\headheight][0pt]{\vbox{\hbox to \textwidth {\oddhead \hfil \thepage}\vskip4pt \hrule}}}
\renewcommand{\@evenfoot}{}
\renewcommand{\@oddfoot}{}
\makeatother

\newpage
\vphantom{math}
\protect\tableofcontents

\vspace{0.5cm}

\section{Introduction}
The importance of the mathematical description of correlations in numerous problems
of the modern statistical mechanics is well-known. Among them in particular, we refer
to such fundamental problems as the problem of quantum measurements and of a description
of collective behavior of interacting particles by quantum kinetic equations
\cite{AA,BDGM,ESchY2,EShY10,FGS,M1,PP09,S-R,G11,GT}. Owing to the intrinsic complexity
and richness of these problems, primarily it is necessary to develop an adequate mathematical
theory of underlying evolution equations.

The goal of this paper is to derive rigorously the evolution equations for marginal correlation
operators that give an equivalent approach to the description of the evolution of states in
comparison with marginal density operators governed by the quantum BBGKY hierarchy and to
construct a solution of the corresponding Cauchy problem.

We briefly outline the results and structure of the paper.
In introductory section 2 we set forth an approach to the description of the evolution of
states of quantum many-particle systems within the framework of correlation operators governed
by the von Neumann hierarchy \cite{GerShJ,GP}.
In section 3 we introduce the notion of marginal correlation operators. To justify this notion,
we discuss in detail the motivation of the description of states within the framework of marginal
correlation operators or in other words, the origin of the microscopic description of correlations
in quantum many-particle systems. Then we rigorously derive the nonlinear quantum BBGKY hierarchy
for marginal correlation operators from the von Neumann hierarchy for correlation operators. The
nonlinear quantum BBGKY hierarchy gives an alternative method of the description of the evolution
of states of infinitely many particles in comparison with the quantum BBGKY hierarchy for the marginal
density operators \cite{BogLect,Pe95}.
In section 4 we construct the nonperturbative solution of the Cauchy problem of the nonlinear quantum
BBGKY hierarchy. The nonperturbative solution is determined in the form of expansion over particle
clusters, which evolution is governed by the corresponding-order cumulant of the nonlinear group of
operators generated by the solution of the von Neumann hierarchy. The existence theorem for initial
data from the space of trace-class operators is proved. We also give some comments on the mean field
(self-consistent field) asymptotic behavior of the constructed solution.
Finally, in section 5 we conclude with some observations and perspectives for future research in the
light of the results we present here.


\section{Preliminary facts: the von Neumann hierarchy}
We consider a quantum system of a non-fixed, i.e. arbitrary but finite, number of identical
(spinless) particles with unit mass $m=1$ in the space $\mathbb{R}^{\nu}, \nu\geq 1$, that
obey the Maxwell-Boltzmann statistics. Let
$\mathcal{F}_{\mathcal{H}}={\bigoplus\limits}_{n=0}^{\infty}\mathcal{H}_{n}$ be the Fock space
over the Hilbert space $\mathcal{H}$, where the $n$-particle Hilbert space $\mathcal{H}_n\equiv
\mathcal{H}^{\otimes n}$ is a tensor product of $n$ Hilbert spaces $\mathcal{H}$ and we adopt
the usual convention that $\mathcal{H}_{0}=\mathbb{C}$. The Hamiltonian $H_{n}$ of the $n$-particle
system is a self-adjoint operator with the domain $\mathcal{D}(H_{n})\subset\mathcal{H}_{n}$
\begin{eqnarray*}\label{H_n}
    &&H_{n}=\sum\limits_{i=1}^{n}K(i)+
           \,\sum\limits_{i_{1}<i_{2}=1}^{n}\Phi(i_{1},i_{2}),
\end{eqnarray*}
where $K(i)$ is the operator of a kinetic energy of the $i$ particle and $\Phi(i_{1},i_{2})$ is the
operator of a two-body interaction potential. In particular on functions $\psi_n$ that belong to the
subspace $L^{2}_{0}(\mathbb{R}^{\nu n})\subset \mathcal{D}(H_n) \subset L^{2}(\mathbb{R}^{\nu n})$
of infinitely differentiable symmetric functions with compact supports the operator $K(i)$ acts
according to the the formula: $K(i)\psi_n = -\frac{\hbar^2}{2}\Delta_{q_i}\psi_n$, where $2\pi\hbar$
is a Planck constant, and for the operator $\Phi$ we have: $\Phi(i_{1},i_{2})\psi_{n}=
\Phi(q_{i_{1}},q_{i_{2}})\psi_{n}$, respectively. We assume that the function $\Phi(q_{i_{1}},q_{i_{2}})$
is symmetric with respect to permutations of arguments and it is translation-invariant bounded function.

States of a system of the Maxwell-Boltzmann particles belong to the space $\mathfrak{L}^{1}(\mathcal{F}_\mathcal{H})
=\oplus_{n=0}^{\infty}\mathfrak{L}^{1}(\mathcal{H}_{n})$ of sequences $f=(f_0,f_{1},\ldots,f_{n},\ldots)$
of trace-class operators $f_{n}\equiv f_{n}(1,\ldots,n)\in\mathfrak{L}^{1}(\mathcal{H}_{n})$ and $f_0\in\mathbb{C}$,
that satisfy the symmetry condition: $f_{n}(1,\ldots,n)=f_{n}(i_{1},\ldots,i_{n})$ for arbitrary
$(i_{1},\ldots,i_{n})\in(1,\ldots,n)$, equipped with the norm
\begin{eqnarray*}
    &&\|f\|_{\mathfrak{L}^{1} (\mathcal{F}_\mathcal{H})}=
        \sum\limits_{n=0}^{\infty} \|f_{n}\|_{\mathfrak{L}^{1}(\mathcal{H}_{n})}=
        \sum\limits_{n=0}^{\infty}~\mathrm{Tr}_{1,\ldots,n}|f_{n}(1,\ldots,n)|,
\end{eqnarray*}
where $\mathrm{Tr}_{1,\ldots,n}$ are partial traces over $1,\ldots,n$ particles \cite{Pe95}. We denote
by $\mathfrak{L}^{1}_{0}(\mathcal{F}_\mathcal{H})$ the everywhere dense set in
$\mathfrak{L}^{1}(\mathcal{F}_\mathcal{H})$
of finite sequences of degenerate operators with infinitely differentiable kernels with compact supports.

We describe states of a system by means of sequences $g(t)=(g_0,g_{1}(t,1),\ldots,g_{n}(t,1,\ldots,n)$, $\ldots)\in\mathfrak{L}^{1}(\mathcal{F}_\mathcal{H})$ of the correlation operators $g_{n}(t),\,n\geq1$.
The evolution of all possible states is determined by the initial-value problem of the von Neumann hierarchy
\cite{GerShJ,GP}
\begin{eqnarray}
 \label{vNh}
   &&\frac{d}{dt}g_{s}(t,Y)=\mathcal{N}(Y\mid g(t)),\\ \nonumber\\
 \label{vNhi}
   &&g_{s}(t,Y)\big|_{t=0}=g_{s}(0,Y),\quad s\geq1,
\end{eqnarray}
where the following notations are used:
\begin{eqnarray}\label{vNgenerator}
   &&\mathcal{N}(Y\mid g(t))\doteq-\mathcal{N}_{s}(Y)g_{s}(t,Y)+\\
   &&+\sum\limits_{\mathrm{P}:\,Y=X_{1}\bigcup X_2}\,\sum\limits_{i_{1}\in X_{1}}
      \sum\limits_{i_{2}\in X_{2}}
      (-\mathcal{N}_{\mathrm{int}}(i_{1},i_{2}))g_{|X_{1}|}(t,X_{1})g_{|X_{2}|}(t,X_{2}),\nonumber
\end{eqnarray}
${\sum\limits}_{\mathrm{P}:\,Y=X_{1}\bigcup X_2}$ is the sum over all possible partitions
$\mathrm{P}$ of the set $Y\equiv(1,\ldots,s)$ into two nonempty mutually disjoint subsets
$X_1\subset Y$ and $X_2\subset Y$, the operator $(-\mathcal{N}_{s})$ defined on
$\mathfrak{L}^{1}_0(\mathcal{H}_s)$ by the formula
\begin{eqnarray}\label{komyt}
   &&(-\mathcal{N}_{s}(Y))f_{s}\doteq-\frac{i}{\hbar}(H_{s}f_{s}-f_{s}H_{s}),
\end{eqnarray}
is the generator of the von Neumann equation \cite{DauL} and the operator $(-\mathcal{N}_{\mathrm{int}})$
is defined by
\begin{eqnarray}\label{oper Nint2}
   &&(-\mathcal{N}_{\mathrm{int}}(i_{1},i_{2}))f_{s}\doteq
     -\frac{i}{\hbar}(\Phi(i_{1},i_{2})f_{s}-f_{s}\Phi(i_{1},i_{2})).
\end{eqnarray}

Hereafter we use the following notations: $(\{X_1\},\ldots,\{X_{|\mathrm{P}|}\})$ is a set, elements of
which are $|\mathrm{P}|$ mutually disjoint subsets $X_i\subset Y\equiv(1,\ldots,s)$ of the partition
$\mathrm{P}:Y=\cup_{i=1}^{|\mathrm{P}|}X_i$, i.e. $|(\{X_1\},\ldots,\{X_{|\mathrm{P}|}\})|=|\mathrm{P}|$.
In view of these notations we state that $(\{Y\})$ is the set consisting of one element $Y=(1,\ldots,s)$
of the partition $\mathrm{P}$ $(|\mathrm{P}|=1)$ and $|(\{Y\})|=1$. We introduce the declasterization
mapping $\theta: (\{X_1\},\ldots,\{X_{|\mathrm{P}|}\})\rightarrow Y$, by the following formula:
$\theta(\{X_1\},\ldots,\{X_{|\mathrm{P}|}\})=Y$. For example, let $X\equiv(1,\ldots,s+n)$, then for the
set $(\{Y\},X\setminus Y)$ it holds: $\theta(\{Y\},X\setminus Y)=X$.

On the space $\mathfrak{L}^{1}(\mathcal{H}_n)$ we also introduce the mapping: $\mathbb{R}\ni
t\rightarrow\mathcal{G}_{n}(-t)f_{n}$, which is generated by the solution of the von Neumann
equation of $n$ particles \cite{DauL,BR}
\begin{eqnarray}\label{groupG}
  &&\mathcal{G}_{n}(-t)f_{n}\doteq e^{-{\frac{i}{\hbar}}t H_{n}}\,f_{n}\,e^{{\frac{i}{\hbar}}t H_{n}}.
\end{eqnarray}
This mapping is an isometric strongly continuous group that preserves positivity and self-adjointness
of operators \cite{DauL}. On $\mathfrak{L}_{0}^{1}(\mathcal{H}_{n})\subset\mathcal{D}(-\mathcal{N}_{n})$
the infinitesimal generator of group (\ref{groupG}) is determined by operator (\ref{komyt}).

A solution of the Cauchy problem (\ref{vNh})-(\ref{vNhi}) is given by the following expansion
\cite{GerShJ,GP},\cite{G}
\begin{eqnarray}\label{rozvNh}
  &&g_{s}(t,Y)=\mathcal{G}(t;Y|g(0))\doteq\\
  &&\doteq\sum\limits_{\mathrm{P}:\,Y=\bigcup_i X_i}
     \mathfrak{A}_{|\mathrm{P}|}(t,\{X_1\},\ldots,\{X_{|\mathrm{P}|}\})
     \prod_{X_i\subset \mathrm{P}}g_{|X_i|}(0,X_i),\quad s\geq1,\nonumber
\end{eqnarray}
where ${\sum\limits}_{\mathrm{P}:\,Y=\bigcup_i X_i}$ is the sum over all possible partitions $\mathrm{P}$
of the set $Y\equiv(1,\ldots,s)$ into $|\mathrm{P}|$ nonempty mutually disjoint subsets $X_i\subset Y$,
the evolution operator $\mathfrak{A}_{|\mathrm{P}|}(t)$ is the $|\mathrm{P}|th$-order cumulant of groups
of operators (\ref{groupG}) defined by the formula
\begin{eqnarray} \label{cumulantP}
    &&\mathfrak{A}_{|\mathrm{P}|}(t,\{X_1\},\ldots,\{X_{|\mathrm{P}|}\})\doteq\\
    &&\doteq\sum\limits_{\mathrm{P}^{'}:\,(\{X_1\},\ldots,\{X_{|\mathrm{P}|}\})=
       \bigcup_k Z_k}(-1)^{|\mathrm{P}^{'}|-1}({|\mathrm{P}^{'}|-1})!
       \prod\limits_{Z_k\subset\mathrm{P}^{'}}\mathcal{G}_{|\theta(Z_{k})|}(-t,\theta(Z_{k})).\nonumber
\end{eqnarray}
Here $\sum_{\mathrm{P}^{'}:\,(\{X_1\},\ldots,\{X_{|\mathrm{P}|}\})=\bigcup_k Z_k}$ is the sum over all
possible partitions $\mathrm{P}^{'}$ of the set $(\{X_1\},\ldots,$ $\{X_{|\mathrm{P}|}\})$ into
$|\mathrm{P}^{'}|$ nonempty mutually disjoint subsets $Z_k\subset (\{X_1\},\ldots,$ $\{X_{|\mathrm{P}|}\})$.
For operators \eqref{rozvNh} the estimate holds
\begin{eqnarray}\label{gEstimate}
    &&\big\|g_s(t)\big\|_{\mathfrak{L}^{1}(\mathcal{H}_{s})}\leq s!e^{2s}\emph{c}^s,
\end{eqnarray}
where $\emph{c}\equiv{\max\limits}_{\mathrm{P}:Y=\bigcup_i X_i}(\|g_{|X_1|}(0)\|_{\mathfrak{L}^{1}(\mathcal{H}_{|X_1|})},
\ldots,\|g_{|X_{|\mathrm{P}|}|}(0)\|_{\mathfrak{L}^{1}(\mathcal{H}_{|X_{|\mathrm{P}|}|})})$.

If $g_{n}(0)\in \mathfrak{L}^{1}_{0}(\mathcal{H}_{n})\subset\mathfrak{L}^{1}(\mathcal{H}_{n}),\,n\geq1$,
expansion (\ref{rozvNh}) is a strong (classical) solution of the Cauchy problem (\ref{vNh})-(\ref{vNhi})
and for arbitrary initial data $g_{n}(0)\in\mathfrak{L}^{1}(\mathcal{H}_{n}),\,n\geq1$,
it is a weak (generalized) solution \cite{GerShJ,GP}.

In case of the absence of correlations between particles at initial time, i.e. initial data satisfying
a chaos condition, the sequence of correlation operators has the form
\begin{eqnarray}\label{gChaos}
   &&g(0)=(0,g_{1}(0,1),0,\ldots).
\end{eqnarray}
The corresponding solution of the initial-value problem of the von Neumann hierarchy is given
by the expansion
\begin{eqnarray}\label{gth}
   &&g_{s}(t,Y)=\mathfrak{A}_{s}(t,Y)\,\prod\limits_{i=1}^{s}g_{1}(0,i),
\end{eqnarray}
where $\mathfrak{A}_{s}(t)$ is the $sth$-order cumulant defined by
\begin{eqnarray*}\label{cumcp}
   &&\mathfrak{A}_{s}(t,Y)=\sum\limits_{\mathrm{P}:\,Y=
      \bigcup_i X_i}(-1)^{|\mathrm{P}|-1}({|\mathrm{P}|-1})!
      \prod\limits_{X_i\subset\mathrm{P}}\mathcal{G}_{|X_i|}(-t,X_i).
\end{eqnarray*}
For operators (\ref{gth}) estimate (\ref{gEstimate}) takes the corresponding form:
\begin{eqnarray*}\label{gEstimateh}
   &&\big\|g_s(t)\big\|_{\mathfrak{L}^{1}(\mathcal{H}_{s})}\leq
      s!e^{s}\big\|g_1(0)\big\|_{\mathfrak{L}^{1}(\mathcal{H})}^s.
\end{eqnarray*}

We note that correlations created in evolutionary process of a system are described
by formula (\ref{gth}) and determined by the corresponding-order cumulant of the groups of
operators (\ref{groupG}) of the von Neumann equations.


\section{The nonlinear quantum BBGKY hierarchy}
The evolution of states of infinite-particle quantum systems is traditionally described by the
marginal (or $s$-particle) density operators governed by the quantum BBGKY hierarchy \cite{BogLect,Pe95}.
In this section we introduce the marginal correlation operators that give an equivalent approach
to the description of the evolution of such states and describe the nonequilibrium correlations in
quantum systems. We also rigorously derive the nonlinear quantum BBGKY hierarchy for marginal correlation
operators from the von Neumann hierarchy (\ref{vNh}) for correlation operators.

\subsection{Marginal correlation operators and marginal density operators}
In the capacity of an example of a mean-value functional of observables \cite{GP} we
consider the definition of the mean-value functional of the additive-type observable
$A^{(1)}=(0,a_{1}(1),\ldots,$ $\sum_{i_{1}=1}^{n}a_1(i_{1}),\ldots)$
\begin{eqnarray}\label{averagegs}
    &&\langle A^{(1)}\rangle(t)=\sum\limits_{n=0}^{\infty}\frac{1}{n!}
        \,\mathrm{Tr}_{1,\ldots,1+n}\,a_{1}(1)g_{1+n}(t,1,\ldots,1+n),
\end{eqnarray}
where the operators $g_{1+n}(t),\,n\geq0$, are determined by expansions \eqref{rozvNh}, and the
functional of the dispersion for this type of observables
\begin{eqnarray}\label{dispg}
    &&\langle(A^{(1)}-\langle A^{(1)}\rangle)^2\rangle(t)=\\
    &&=\sum\limits_{n=0}^{\infty}\frac{1}{n!}\,\mathrm{Tr}_{1,\ldots,1+n}\,(a_1^2(1)-
       \langle A^{(1)}\rangle^2(t))g_{1+n}(t,1,\ldots,1+n)+\nonumber\\
    &&+\sum\limits_{n=0}^{\infty}\frac{1}{n!}
       \,\mathrm{Tr}_{1,\ldots,2+n}\,a_{1}(1)a_{1}(2)g_{2+n}(t,1,\ldots,2+n).\nonumber
\end{eqnarray}
For $A^{(1)}\in\mathfrak{L}(\mathcal{F}_\mathcal{H})$ and $g\in\mathfrak{L}^{1}(\mathcal{F}_\mathcal{H})$
functionals \eqref{averagegs},\eqref{dispg} exists.

Following to formula \eqref{dispg}, we introduce the marginal correlation operators by the series
\begin{eqnarray}\label{Gexpg}
   &&G_{s}(t,1,\ldots,s)\doteq\sum\limits_{n=0}^{\infty}\frac{1}{n!}\,
      \mathrm{Tr}_{s+1,\ldots,s+n}\,\,g_{s+n}(t,1,\ldots,s+n),\quad s\geq1,
\end{eqnarray}
where the sequence $g_{s+n}(t,1,\ldots,s+n),\,n\geq0$, is a solution of the Cauchy problem of the
von Neumann hierarchy (\ref{vNh}). According to estimate (\ref{gEstimate}), series (\ref{Gexpg}) exists
and the estimate holds:
$\big\|G_s(t)\big\|_{\mathfrak{L}^{1}(\mathcal{H}_{s})}\leq s!(2e^2)^s\emph{c}^s\sum_{n=0}^{\infty}(2e^2)^n\emph{c}^n$.
Thus, macroscopic characteristics of fluctuations of observables are determined by marginal correlation
operators \eqref{Gexpg} on the microscopic level
\begin{eqnarray*}
    &&\langle(A^{(1)}-\langle A^{(1)}\rangle)^2\rangle(t)=
      \mathrm{Tr}_{1}\,(a_1^2(1)-\langle A^{(1)}\rangle^2(t))G_{1}(t,1)+
      \mathrm{Tr}_{1,2}\,a_{1}(1)a_{1}(2)G_{2}(t,1,2).
\end{eqnarray*}

Traditionally marginal correlation operators are introduced by means of the cluster expansions of the
marginal density operators $F_{s}(t),\,s\geq1$, governed by the quantum BBGKY hierarchy \cite{BogLect}
\begin{eqnarray}\label{FG}
   &&F_{s}(t,Y)=\sum\limits_{\mbox{\scriptsize $\begin{array}{c}\mathrm{P}:Y=\bigcup_{i}X_{i}\end{array}$}}
      \prod_{X_i\subset \mathrm{P}}G_{|X_i|}(t,X_i),\quad s\geq1,
\end{eqnarray}
where ${\sum\limits}_{\mathrm{P}:Y=\bigcup_{i} X_{i}}$ is the sum over all possible partitions $\mathrm{P}$
of the set $Y\equiv(1,\ldots,s)$ into $|\mathrm{P}|$ nonempty mutually disjoint subsets $X_i\subset Y$.
Hereupon solutions of cluster expansions \eqref{FG}
\begin{eqnarray}\label{gBigfromDFB}
   &&G_{s}(t,Y)=\sum\limits_{\mbox{\scriptsize $\begin{array}{c}\mathrm{P}:Y=\bigcup_{i}X_{i}\end{array}$}}
      (-1)^{|\mathrm{P}|-1}(|\mathrm{P}|-1)!\,\prod_{X_i\subset \mathrm{P}}F_{|X_i|}(t,X_i), \quad s\geq1,
\end{eqnarray}
are interpreted as the operators that describe correlations of many-particle systems. Thus, marginal
correlation operators \eqref{gBigfromDFB} are cumulants (semi-invariants) of the marginal density operators.

As follows from formula \eqref{averagegs} and its generalization \cite{GP} the
marginal density operators $F_s(t)$ are defined in terms of the correlation operators of clusters of
particles $g^{(s)}(t)=(g_{1+0}(t,\{Y\}),\ldots,$ $g_{1+n}(t,\{Y\},s+1,\ldots,s+n),\ldots)$ by the expansion
\begin{eqnarray}\label{FClusters}
    &&F_{s}(t,Y)\doteq\sum\limits_{n=0}^{\infty}\frac{1}{n!}\,
       \mathrm{Tr}_{s+1,\ldots,s+n}\,\,g_{1+n}(t,\{Y\},s+1,\ldots,s+n),\quad s\geq1,
\end{eqnarray}
where the sequence $g_{1+n}(t,\{Y\},s+1,\ldots,s+n),\,n\geq0$, is a solution of the Cauchy problem
of the von Neumann hierarchy for correlation operators of particle clusters \cite{GP}, namely
\begin{eqnarray}\label{rozvNF-N_F_clusters}
   &&g_{1+n}(t,\{Y\},X\setminus Y)=\mathcal{G}(t;\{Y\},X\setminus Y|g(0))\doteq\\
   &&\doteq\sum\limits_{\mathrm{P}:\,(\{Y\},\,X\setminus Y)=\bigcup_i X_i}
      \mathfrak{A}_{|\mathrm{P}|}\big(t,\{\theta(X_1)\},\ldots,\{\theta(X_{|\mathrm{P}|})\}\big)
      \prod_{X_i\subset \mathrm{P}}g_{|X_i|}(0,X_i),\quad s\geq 1,\,n\geq 0,\nonumber
\end{eqnarray}
where $\mathfrak{A}_{|\mathrm{P}|}(t)$ is the $|\mathrm{P}|th$-order cumulant defined by formula
(\ref{cumulantP}). According to estimate (\ref{gEstimate}), series (\ref{FClusters}) exists
and the estimate holds:
$\big\|F_s(t)\big\|_{\mathfrak{L}^{1}(\mathcal{H}_{s})}\leq e^3\emph{c}\sum_{n=0}^{\infty}e^{3n}\emph{c}^n$.
We note that every term of marginal correlation operator expansion \eqref{Gexpg}
is determined by the $(s+n)$-particle correlation operator \eqref{rozvNh} as contrasted to marginal
density operator expansion \eqref{FClusters} which is defined by the $(1+n)$-particle correlation
operator \eqref{rozvNF-N_F_clusters}.

The correlation operators of particle clusters $g^{(s)}(t)=(g_{1+0}(t,\{Y\}),\ldots,g_{1+n}(t,\{Y\},
X\setminus Y),$ $\ldots)\in\mathfrak{L}^{1}(\oplus_{n=0}^{\infty}\mathcal{H}_{s+n})$ can be expressed
in terms of correlation operators of particles (\ref{rozvNh})
\begin{eqnarray}\label{gCluster}
  &&g_{1+n}(t,\{Y\},X\setminus Y)=\\
  &&=\sum\limits_{\mathrm{P}:(\{Y\},\,X\setminus Y)=\bigcup_i X_i}
     (-1)^{|\mathrm{P}|-1}(|\mathrm{P}| -1)!\,
     \prod_{X_i\subset \mathrm{P}}\,\,\sum\limits_{\mathrm{P'}:\,\theta(X_{i})=
     \bigcup_{j_i} Z_{j_i}}\hskip1mm\prod_{Z_{j_i}\subset \mathrm{P'}}g_{|Z_{j_i}|}(t,Z_{j_i}).\nonumber
\end{eqnarray}
In particular case $n=0$, i.e. the correlation operator of a cluster of $|Y|$ particles, these
relations take the form
\begin{eqnarray*}\label{gCluster0}
  &&g_{1+0}(t,\{Y\})=\sum\limits_{\mathrm{P}:\,Y=\bigcup_{i} X_{i}}\hskip1mm
      \prod_{X_{i}\subset \mathrm{P}}g_{|X_{i}|}(t,X_{i}).
\end{eqnarray*}
By the way we observe on that cluster expansions \eqref{FG} follow from  definitions \eqref{Gexpg}
and \eqref{FClusters} in consequence of relations \eqref{gCluster} between correlation operators of
particle clusters and correlation operators of particles.

The marginal ($s$-particle) density operators (\ref{FClusters}) are determined by the Cauchy problem
of the quantum BBGKY hierarchy \cite{BogLect}
\begin{eqnarray}
  \label{BBGKY}
   &&\frac{d}{dt}F_{s}(t,Y)=-\mathcal{N}_{s}(Y)F_{s}(t,Y)+
     \sum\limits_{i\in Y}\mathrm{Tr}_{s+1}(-\mathcal{N}_{\mathrm{int}}(i,s+1))F_{s+1}(t),\\
     \nonumber\\
  \label{BBGKYi}
   &&F_{s}(t)\mid_{t=0}=F_{s}(0),\quad s\geq 1.
\end{eqnarray}
We remind that usually the marginal density operators $F_s(t),\,s\geq1$, are defined by the
well-known formula of the nonequilibrium grand canonical ensemble \cite{GP83,CGP97} in terms
of the density operators $D=(I,D_{1}(t),\ldots,D_{n}(t),\ldots)$ governed by the von Neumann
equations (the quantum Liouville equation)
\begin{eqnarray*}\label{F(D)}
  &&F_{s}(t,Y)=(I,D(t))^{-1}\sum\limits_{n=0}^{\infty}\frac{1}{n!}
     \mathrm{Tr}_{s+1,\ldots,s+n}D_{s+n}(t,X),
\end{eqnarray*}
where $(I,D(t))={\sum\limits}_{n=0}^{\infty}\frac{1}{n!}\mathrm{Tr}_{1,\ldots,n}D_{n}(t)$ is a
normalizing factor, $I$ is the identity operator and $Y\equiv(1,\ldots,s)$, $X\equiv(1,\ldots,s+n)$.
Thus, along with the definition within the framework of the non-equilibrium grand canonical ensemble
the marginal density operators can be defined within the framework of dynamics of correlations that
allows to give the rigorous meaning of the states for more general classes of operators than the
trace-class operators.

If $F(0)\in\mathfrak{L}_{\alpha}^{1}(\mathcal{F}_\mathcal{H})=
{\bigoplus}_{n=0}^{\infty}{\alpha}^{n}\mathfrak{L}_{\alpha}^{1}(\mathcal{H}_n)$ and $\alpha>e$,
then for $t\in\mathbb{R}$ a unique solution of the Cauchy problem \eqref{BBGKY}-\eqref{BBGKYi}
of the quantum BBGKY hierarchy exists and is given by the expansion \cite{GerShJ},\cite{DP}
\begin{eqnarray}\label{RozvBBGKY}
   &&F_{s}(t,Y)=\sum\limits_{n=0}^{\infty}\frac{1}{n!}\,\mathrm{Tr}_{s+1,\ldots,{s+n}}\,
       \mathfrak{A}_{1+n}(t,\{Y\},\, X\backslash Y)F_{s+n}(0,X), \quad s\geq1,
\end{eqnarray}
where the $(1+n)th$-order cumulant $\mathfrak{A}_{1+n}(t)$ of groups of operators
\eqref{groupG} is defined by
\begin{eqnarray}\label{cumulant1+n}
   &&\hskip-5mm\mathfrak{A}_{1+n}(t,\{Y\},\,X\backslash Y)=
     \sum\limits_{\mathrm{P}\,:(\{Y\},\,X\setminus Y)=
     {\bigcup\limits}_i X_i}(-1)^{|\mathrm{P}|-1}(|\mathrm{P}|-1)!
     \prod_{X_i\subset\mathrm{P}}\mathcal{G}_{|\theta(X_i)|}(-t,\theta(X_i)),
\end{eqnarray}
${\sum\limits}_\mathrm{P}$ is the sum over all possible partitions $\mathrm{P}$ of the set
$(\{Y\},\,X\setminus Y)$ into $|\mathrm{P}|$ nonempty mutually disjoint subsets
$X_i\subset(\{Y\},\,X\setminus Y)$.

Formally, the evolution equations for marginal correlation operators are derived from the quantum
BBGKY hierarchy for marginal density operators (\ref{BBGKY}) on basis of expression (\ref{gBigfromDFB}).
Then the evolution of all possible states of quantum many-particle systems obeying the Maxwell-Boltzmann
statistics with the Hamiltonian (\ref{H_n}) can be described within the framework of marginal correlation
operators governed by the nonlinear quantum BBGKY hierarchy
\begin{eqnarray}
 \label{gBigfromDFBa}
   &&\frac{d}{dt}G_s(t,Y)=\mathcal{N}(Y\mid G(t))+
      \mathrm{Tr}_{s+1}\sum_{i\in Y}(-\mathcal{N}_{\mathrm{int}}(i,s+1))\big(G_{s+1}(t,Y,s+1)+\\
   &&+\sum_{\mbox{\scriptsize$\begin{array}{c}\mathrm{P}:(Y,s+1)=X_1\bigcup X_2,\\i\in
      X_1;s+1\in X_2\end{array}$}}G_{|X_1|}(t,X_1)G_{|X_2|}(t,X_2)\big),\nonumber\\ \nonumber\\
 \label{gBigfromDFBai}
   &&G_{s}(t,Y)\big|_{t=0}=G_{s}(0,Y),\quad s\geq1.
\end{eqnarray}
In equation \eqref{gBigfromDFBa} the operator $\mathcal{N}(Y\mid G(t))$ is generator of the von Neumann
hierarchy (\ref{vNh}) defined by formula \eqref{vNgenerator}, i.e.
\begin{eqnarray*}\label{Nnl}
   &&\mathcal{N}(Y\mid G(t))\doteq(-\mathcal{N}_{s}(Y))G_s(t,Y)+\\
   &&+\sum\limits_{\mathrm{P}:\,Y=X_{1}\bigcup X_2}\,\sum\limits_{i_{1}\in X_{1}}
      \sum\limits_{i_{2}\in X_{2}}
      (-\mathcal{N}_{\mathrm{int}}(i_{1},i_{2}))G_{|X_{1}|}(t,X_{1})G_{|X_{2}|}(t,X_{2}),\nonumber
\end{eqnarray*}
where the operators $(-\mathcal{N}_{s})$ and $(-\mathcal{N}_{\mathrm{int}})$ are defined by
(\ref{komyt}) and (\ref{oper Nint2}) respectively, ${\sum\limits}_{\mathrm{P}:\,
Y=X_{1}\bigcup X_2}$ is the sum over all possible partitions $\mathrm{P}$ of the set
$Y\equiv(1,\ldots,s)$ into two nonempty mutually disjoint subsets $X_1\subset Y$ and $X_2\subset Y$, and
$\sum_{\mbox{\scriptsize$\begin{array}{c}\mathrm{P}:(Y,s+1)=X_1\bigcup X_2,\\i\in X_1;s+1\in X_2\end{array}$}}$
is the sum over all possible partitions of the set $(Y,s+1)$ into two mutually disjoint subsets $X_1$ and
$X_2$ such that $ith$ particle belongs to the subset $X_1$ and $(s+1)th$ particle belongs to $X_2$.
As far as we know hierarchy (\ref{gBigfromDFBa}) was introduced by M.M. Bogolyubov \cite{BogLect} and
in the papers of J. Yvon \cite{Y} and M.S. Green \cite{Gre56} for systems of classical particles.

Another method of the justification of evolution equations for marginal correlation operators consists
in their derivation from the von Neumann hierarchy for correlation operators (\ref{vNh}) on the basis of
definition (\ref{Gexpg}).

\subsection{The derivation of the nonlinear quantum BBGKY hierarchy}
In this section we establish that marginal correlation operators (\ref{Gexpg}) are governed by the
nonlinear quantum BBGKY hierarchy (\ref{gBigfromDFBa}). With this aim we differentiate by time variable
the marginal correlation operators defined by series (\ref{Gexpg}) in the sense of the pointwise
convergence on the space $\mathfrak{L}^{1}(\mathcal{H}_{s})$
\begin{eqnarray}\label{Gexpg_2}
   &&\frac{d}{dt}G_{s}(t,Y)=\sum\limits_{n=0}^{\infty}\frac{1}{n!}\,
      \mathrm{Tr}_{s+1,\ldots,s+n}\,\,\big((-\mathcal{N}_{s+n}(X))g_{s+n}(t,X)+\nonumber\\
   &&+\sum_{\mathrm{P}:X=X_1\cup X_2}\sum_{i_1\in X_1}
      \sum_{i_2\in X_2}(-\mathcal{N}_{\mathrm{int}}(i_{1},i_{2}))g_{|X_1|}(t,X_1)g_{|X_2|}(t,X_2)\big),
\end{eqnarray}
where we use the notations: $X\equiv(1,\ldots,s+n)$, $Y\equiv(1,\ldots,s)$, ${\sum\limits}_{\mathrm{P}:\,
Y=X_{1}\bigcup X_2}$ is the sum over all possible partitions $\mathrm{P}$ of the set
$Y\equiv(1,\ldots,s)$ into two nonempty mutually disjoint subsets $X_1\subset Y$ and $X_2\subset Y$,
and operators $(-\mathcal{N}_{s+n})$ and $(-\mathcal{N}_{\mathrm{int}})$ are defined by formulas
(\ref{komyt}) and (\ref{oper Nint2}) respectively. Taking into account the equality
\begin{eqnarray*}
   &&-\mathcal{N}_{s+n}(X)=-\mathcal{N}_{s}(Y)-\mathcal{N}_{n}(X\backslash Y)+
     \sum_{i_1\in Y}\sum_{i_2\in X\backslash Y}(-\mathcal{N}_{\mathrm{int}}(i_{1},i_{2})),
\end{eqnarray*}
and the identity
\begin{eqnarray*}
   &&\mathrm{Tr}_{s+1,\ldots,s+n}(-\mathcal{N}_{n}(X\setminus Y))g_{s+n}(t)=0,
\end{eqnarray*}
and according to the symmetry property of the correlation operators $g_{s+n}(t,X)$, for the first
term in the right-hand side of identity (\ref{Gexpg_2}) in terms of definition (\ref{Gexpg}) we obtain
\begin{eqnarray}\label{d1}
   &&\sum\limits_{n=0}^{\infty}\frac{1}{n!}\,
      \mathrm{Tr}_{s+1,\ldots,s+n}\,\,(-\mathcal{N}_{s+n}(X))g_{s+n}(t,1,\ldots,s+n)= \\
   &&=-\mathcal{N}_s(Y)G_s(t,Y)+\sum\limits_{n=1}^{\infty}\frac{1}{n!}\,\mathrm{Tr}_{s+1,\ldots,s+n}\,
      \sum_{i_1\in Y}\sum_{i_2\in X \backslash Y}(-\mathcal{N}_{\mathrm{int}}(i_{1},i_{2}))g_{s+n}(t,X)=\nonumber\\
   &&=-\mathcal{N}_s(Y)G_s(t,Y)+\sum\limits_{n=0}^{\infty}\frac{1}{n!}\,
      \mathrm{Tr}_{s+1,\ldots,s+n+1}
      \sum_{i_1\in Y}(-\mathcal{N}_{\mathrm{int}}(i_{1},s+1))g_{s+n+1}(t,X,s+n+1)=\nonumber
\end{eqnarray}
\begin{eqnarray*}
   &&=-\mathcal{N}_s(Y)G_s(t,Y)+\mathrm{Tr}_{s+1}\sum_{i_1\in Y}(-\mathcal{N}_{\mathrm{int}}(i_{1},s+1))
      G_{s+1}(t,1,\ldots,s+1).\nonumber
\end{eqnarray*}
Let us consider successively the following four parts of the second term of the right-hand side
of identity (\ref{Gexpg_2})
\begin{eqnarray}\label{sums}
  &&\sum_{\mathrm{P}:X=X_1\cup X_2}\sum_{i_1\in X_1}
    \sum_{i_2\in X_2}(-\mathcal{N}_{\mathrm{int}}(i_{1},i_{2}))=\sum_{\mathrm{P}:X=X_1\cup X_2}
    \big(\sum_{i_1\in Y\cap X_1}\sum_{i_2\in Y\cap X_2}+\\
  &&+\sum_{i_1\in X\backslash Y\cap X_1}\sum_{i_2\in Y\cap X_2}+
    \sum_{i_1\in Y\cap X_1}\sum_{i_2\in X\backslash Y\cap X_2}+\sum_{i_1\in X\backslash Y\cap X_1}
    \sum_{i_2\in X\backslash Y\cap X_2}\big)(-\mathcal{N}_{\mathrm{int}}(i_{1},i_{2})).\nonumber
\end{eqnarray}
Taking into account that the equality is true
\begin{eqnarray*}
   &&\sum_{\mbox{\scriptsize$\begin{array}{c}\mathrm{P}:X=X_1\cup X_2,\\
      Y\cap X_1\neq\emptyset,Y\cap X_2\neq\emptyset\end{array}$}}g_{|X_1|}(t,X_1)g_{|X_2|}(t,X_2)=\\
   &&=\sum_{\mbox{\scriptsize$\begin{array}{c}\mathrm{P}:Y=Y_{1}\bigcup Y_2,\\
      Y_{1}\neq\emptyset,Y_{2}\neq\emptyset\end{array}$}}\sum\limits_{Z\subset X\setminus Y}
      g_{|Y_{1}|+|Z|}(t,Y_{1},Z)g_{|Y_{2}|+|(X\setminus Y)\setminus Z|}(t,Y_{2},(X\setminus Y)\setminus Z),
\end{eqnarray*}
and the validity of the following equality (according to the symmetry property of operators $g_n(t)$)
\begin{eqnarray*}
   &&\mathrm{Tr}_{s+1,\ldots,s+n}\sum\limits_{Z\subset X\setminus Y}\,
      g_{|Y_{1}|+|Z|}(t,Y_{1},Z)g_{|Y_{2}|+|(X\setminus Y)\setminus Z|}(t,Y_{2},(X\setminus Y)\setminus Z)=\\
   &&=\mathrm{Tr}_{s+1,\ldots,s+n}\sum\limits_{k=0}^{n}\frac{n!}{k!(n-k)!}
      g_{|Y_{1}|+n-k}(t,Y_{1},s+1,\ldots,s+n-k)\times\\
   &&\times g_{|Y_{2}|+k}(t,Y_{2},s+n-k+1,\ldots,s+n),
\end{eqnarray*}
for the first part of equality \eqref{sums} of the second term of identity \eqref{Gexpg_2} it holds
\begin{eqnarray*}
   &&\mathrm{Tr}_{s+1,\ldots,s+n}\sum_{\mathrm{P}:X=X_1\cup X_2}\sum_{i_1\in Y\cap X_1}
      \sum_{i_2\in Y\cap X_2}(-\mathcal{N}_{\mathrm{int}}(i_{1},i_{2}))g_{|X_1|}(t,X_1)g_{|X_2|}(t,X_2)=\\
   &&=\mathrm{Tr}_{s+1,\ldots,s+n}\sum\limits_{\mathrm{P}:\,Y=Y_{1}\bigcup Y_2}\,
      \sum\limits_{i_{1}\in Y_{1}}\sum\limits_{i_{2}\in Y_{2}}
      (-\mathcal{N}_{\mathrm{int}}(i_{1},i_{2}))\sum\limits_{k=0}^{n}\frac{n!}{k!(n-k)!}\times\\
   &&\times g_{|Y_{1}|+n-k}(t,Y_{1},s+1,\ldots,s+n-k)g_{|Y_{2}|+k}(t,Y_{2},s+n-k+1,\ldots,s+n).
\end{eqnarray*}
Then in terms of definition (\ref{Gexpg}) the last expression takes the form
\begin{eqnarray*}
    &&\sum\limits_{n=0}^{\infty}\frac{1}{n!}\mathrm{Tr}_{s+1,\ldots,s+n}
      \sum\limits_{\mathrm{P}:\,Y=Y_{1}\bigcup Y_2}\,\sum\limits_{i_{1}\in Y_{1}}\sum\limits_{i_{2}\in Y_{2}}
      (-\mathcal{N}_{\mathrm{int}}(i_{1},i_{2}))\sum\limits_{k=0}^{n}\frac{n!}{k!(n-k)!}\times\\
    &&\times g_{|Y_{1}|+n-k}(t,Y_{1},s+1,\ldots,s+n-k)g_{|Y_{2}|+k}(t,Y_{2},s+n-k+1,\ldots,s+n)=\\
    &&=\sum\limits_{\mathrm{P}:\,Y=Y_{1}\bigcup Y_2}\,\sum\limits_{i_{1}\in Y_{1}}\sum\limits_{i_{2}\in Y_{2}}
      (-\mathcal{N}_{\mathrm{int}}(i_{1},i_{2}))G_{|Y_{1}|}(t,Y_{1})G_{|Y_{2}|}(t,Y_{2}).
\end{eqnarray*}
Hence for the first part of equality \eqref{sums} of the second term of the right-hand side of identity 
(\ref{Gexpg_2}) we have
\begin{eqnarray}\label{d3}
   &&\hskip-7mm\sum\limits_{n=0}^{\infty}\frac{1}{n!}\,
     \mathrm{Tr}_{s+1,\ldots,s+n}\,\sum_{\mathrm{P}:X=X_1\cup X_2}\sum_{i_1\in Y\cap X_1}
     \sum_{i_2\in Y\cap X_2}(-\mathcal{N}_{\mathrm{int}}(i_{1},i_{2}))g_{|X_1|}(t,X_1)g_{|X_2|}(t,X_2)=\\
   &&\hskip-7mm=\sum\limits_{\mathrm{P}:\,Y=Y_{1}\bigcup Y_2}\,
     \sum\limits_{i_{1}\in Y_{1}}\sum\limits_{i_{2}\in Y_{2}}
     (-\mathcal{N}_{\mathrm{int}}(i_{1},i_{2}))G_{|Y_{1}|}(t,Y_{1})G_{|Y_{2}|}(t,Y_{2}).\nonumber
\end{eqnarray}
Similarly the second and third parts of equality \eqref{sums} of the second term of the right-hand side
of identity (\ref{Gexpg_2}) are expressed in terms of definition (\ref{Gexpg}) in the form
\begin{eqnarray}\label{d4}
   &&\sum\limits_{n=0}^{\infty}\frac{1}{n!}\,
     \mathrm{Tr}_{s+1,\ldots,s+n}\,\sum_{\mathrm{P}:X=X_1\cup X_2}
     \big(\sum_{i_1\in X\backslash Y\cap X_1}\sum_{i_2\in Y\cap X_2}+\\
   &&+\sum_{i_1\in Y\cap X_1}\sum_{i_2\in X\backslash Y\cap X_2}\big)
     (-\mathcal{N}_{\mathrm{int}}(i_{1},i_{2}))g_{|X_1|}(t,X_1)g_{|X_2|}(t,X_2)=\nonumber\\
   &&\hskip-5mm=\sum_{i\in Y}\mathrm{Tr}_{s+1}(-\mathcal{N}_{\mathrm{int}}(i,s+1))
     \sum_{\mbox{\scriptsize $\begin{array}{c}\mathrm{P}:(Y,s+1)=X_1\bigcup X_2,\\i\in X_1;s+1\in X_2 \end{array}$}}G_{|X_1|}(t,X_1)G_{|X_2|}(t,X_2),\nonumber
\end{eqnarray}
where $\sum_{\mbox{\scriptsize$\begin{array}{c}\mathrm{P}:(Y,s+1)=X_1\bigcup X_2,\\i\in X_1;s+1\in X_2\end{array}$}}$
is the sum over all possible partitions of the set $(Y,s+1)$ into two mutually disjoint subsets $X_1$ and $X_2$ such
that $ith$ particle belongs to the subset $X_1$ and $(s+1)th$ particle belongs to $X_2$.

Then taking into account the following identity for the fourth part \eqref{sums} of the second term of (\ref{Gexpg_2})
\begin{eqnarray*}
   &&\hskip-10mm\sum\limits_{n=0}^{\infty}\frac{1}{n!}\,
      \mathrm{Tr}_{s+1,\ldots,s+n}\,\sum_{\mathrm{P}:X=X_1\cup X_2}\sum_{i_1\in X\backslash Y\cap X_1}
      \sum_{i_2\in X\backslash Y\cap X_2}(-\mathcal{N}_{\mathrm{int}}(i_{1},i_{2}))g_{|X_1|}(t,X_1)g_{|X_2|}(t,X_2)=0,
\end{eqnarray*}
and identities \eqref{d3},\eqref{d4}, for the second term of the right-hand side of (\ref{Gexpg_2}) it holds
\begin{eqnarray}\label{d2}
   &&\sum\limits_{n=0}^{\infty}\frac{1}{n!}\,
     \mathrm{Tr}_{s+1,\ldots,s+n}\,\sum_{\mathrm{P}:X=X_1\cup X_2}\sum_{i_1\in X_1}
      \sum_{i_2\in X_2}(-\mathcal{N}_{\mathrm{int}}(i_{1},i_{2}))g_{|X_1|}(t,X_1)g_{|X_2|}(t,X_2)=\\
   &&=\sum\limits_{\mathrm{P}:\,Y=Y_{1}\bigcup Y_2}\,\sum\limits_{i_{1}\in Y_{1}}\sum\limits_{i_{2}\in Y_{2}}
      (-\mathcal{N}_{\mathrm{int}}(i_{1},i_{2}))G_{|Y_{1}|}(t,Y_{1})G_{|Y_{2}|}(t,Y_{2})+\nonumber\\
   &&+\sum_{i\in Y}\mathrm{Tr}_{s+1}(-\mathcal{N}_{\mathrm{int}}(i,s+1))
      \sum_{\mbox{\scriptsize $\begin{array}{c}\mathrm{P}:(Y,s+1)=
      X_1\bigcup X_2,\\i\in X_1;s+1\in X_2 \end{array}$}}G_{|X_1|}(t,X_1)G_{|X_2|}(t,X_2).\nonumber
\end{eqnarray}

In consequence of identities \eqref{d1} and \eqref{d2} we finally derive
\begin{eqnarray*}
   &&\hskip-8mm\frac{d}{dt}G_{s}(t,Y)=-\mathcal{N}_s(Y)G_s(t,Y)+\sum\limits_{\mathrm{P}:\,Y=Y_{1}\bigcup Y_2}\,
     \sum\limits_{i_{1}\in Y_{1}}\sum\limits_{i_{2}\in Y_{2}}
     (-\mathcal{N}_{\mathrm{int}}(i_{1},i_{2})G_{|Y_{1}|}(t,Y_{1})G_{|Y_{2}|}(t,Y_{2})+
\end{eqnarray*}
\begin{eqnarray*}
   &&\hskip-8mm+\sum_{i\in Y}\mathrm{Tr}_{s+1}(-\mathcal{N}_{\mathrm{int}}(i,s+1))(G_{s+1}(t)
     +\sum_{\mbox{\scriptsize $\begin{array}{c}\mathrm{P}:(Y,s+1)=
     X_1\bigcup X_2,\\i\in X_1;s+1\in X_2 \end{array}$}}G_{|X_1|}(t,X_1)G_{|X_2|}(t,X_2)),\nonumber
\end{eqnarray*}
where we use notations accepted above in \eqref{Gexpg_2},\eqref{d4}. The constructed identity for
the marginal correlation operators defined by expansion (\ref{Gexpg}) we treat as the hierarchy of
evolution equations, which governs the marginal correlation operators of quantum many-particle systems.

We also formulate the nonlinear quantum BBGKY hierarchy in case of many-particle systems obeying
quantum statistics with the Hamiltonian
\begin{eqnarray*}\label{Hn}
    &&H_{n}=\sum\limits_{i=1}^{n}K(i)+\sum\limits_{k=1}^{n}\sum\limits_{i_{1}<\ldots<i_{k}=1}^{n}
        \Phi^{(k)}(q_{i_{1}},\ldots,q_{i_{k}}),
\end{eqnarray*}
where $\Phi^{(k)}$ is a $k$-body interaction potential. We introduce the symmetrization operator
$\mathcal{S}^{+}_{n}$ and the anti-symmetrization operator $\mathcal{S}^{-}_{n}$ on
$\mathcal{H}^{\otimes n}$ which are defined on the space $\mathcal{H}_{n}$ by the formula
\begin{eqnarray*}\label{Sn}
    &&\mathcal{S}^{\pm}_{n}\doteq\frac {1}{n!}
       \sum\limits_{\pi\epsilon \mathfrak{S}_{n}}(\pm1)^{|\pi|}p_{\pi},
\end{eqnarray*}
where the operator $p_{\pi}$ is a transposition operator of the permutation $\pi$ from the permutation
group $\mathfrak{S}_{n}$ of the set $(1,\ldots,n)$ and $|\pi|$ denotes the number of transpositions in
the permutation $\pi$. The operators $\mathcal{S}^{\pm}_{n}$ are orthogonal projectors, i.e.
$({\mathcal{S}^{\pm}_{n}})^{2}=\mathcal{S}^{\pm}_{n}$, ranges of which are correspondingly the symmetric
tensor product $\mathcal{H}_{n}^{+}$ and the antisymmetric tensor product $\mathcal{H}_{n}^{-}$ of $n$
Hilbert spaces $\mathcal{H}$. We denote by $\mathcal{F}^{+}_{\mathcal{H}}=
{\bigoplus\limits}_{n=0}^{\infty}\mathcal{H}_{n}^{+}$ and $\mathcal{F}^{-}_{\mathcal{H}}=
{\bigoplus\limits}_{n=0}^{\infty}\mathcal{H}_{n}^{-}$ the Bose and Fermi Fock spaces over the Hilbert
space $\mathcal{H}$ respectively.

The evolution of all possible states of many-particle systems of bosons or fermions is described
within the framework of marginal correlation operators
$G(t)=(G_0,G_{1}(t),\ldots,G_{s}(t),\ldots)\in\mathfrak{L}^{1}(\mathcal{F}_\mathcal{H}^{\pm})$ governed
by the following nonlinear quantum BBGKY hierarchy
\begin{eqnarray}\label{eqG}
  &&\hskip-5mm\frac{d}{dt}G_s(t,Y)=-\mathcal{N}_{s}(Y)G_{s}(t,Y)+\\
  &&\hskip-5mm+\sum\limits_{\mbox{\scriptsize $\begin{array}{c}\mathrm{P}:Y=\bigcup_{i}X_{i},\\
     |\mathrm{P}|>1\end{array}$}}\hskip-2mm\sum\limits_{\mbox{\scriptsize$\begin{array}{c}{Z_{1}\subset X_{1}},\\
     Z_{1}\neq\emptyset\end{array}$}}\hskip-2mm\ldots\hskip-2mm
     \sum\limits_{\mbox{\scriptsize $\begin{array}{c} {Z_{|\mathrm{P}|}\subset X_{|\mathrm{P}|}},\\
     Z_{|\mathrm{P}|}\neq\emptyset \end{array}$}}
     \hskip-2mm \big(-\mathcal{N}_{\mathrm{int}}^{(\sum\limits_{r=1}^{|\mathrm{P}|}|Z_{r}|)}
     (Z_{1},\ldots,Z_{|\mathrm{P}|})\big)\mathcal{S}^{\pm}_{s}
     \prod_{X_{i}\subset\mathrm{P}}G_{|X_{i}|}(t,X_{i})+\nonumber\\
  &&\hskip-5mm+\sum_{n=1}^\infty\sum_{k=1}^n\sum_{1=j_1<\ldots<j_k}^s \mathrm{Tr}_{s+1,\ldots,s+1+n-k}
     (-\mathcal{N}_{\mathrm{int}}^{(n+1)}(j_1,\ldots,j_k,s+1,\ldots,s+1+n-k))\times\nonumber\\
  &&\hskip-5mm\times\sum_{\mbox{\scriptsize$\begin{array}{c}\mathrm{P}:(1,\ldots,s+1+n-k)=\bigcup_{i=1}X_i,
     \\X_i\not \subseteq Y\backslash(j_1,\ldots,j_k),\,|\mathrm{P}|\leq n+1\end{array}$}}
     \mathcal{S}^{\pm}_{s+1+n-k}\prod_{X_{i}\subset\mathrm{P}}G_{|X_i|}(t,X_i),\nonumber
\end{eqnarray}
where notations accepted above are used and the operator $(-\mathcal{N}_{\mathrm{int}}^{(k)})$ acts
on $\mathfrak{L}^{1}_{0}(\mathcal{H}_{s})\subset\mathfrak{L}^{1}(\mathcal{H}_{s})$ according to the formula
\begin{eqnarray}\label{Nintn}
   &&(-\mathcal{N}_{\mathrm{int}}^{(k)}(i_{1},\ldots,i_{k}))f_{s}\doteq
      -\frac{i}{\hbar}(\Phi^{(k)}(i_{1},\ldots,i_{k})f_{s}-f_{s}\Phi^{(k)}(i_{1},\ldots,i_{k})).
\end{eqnarray}

We emphasize that the evolution of marginal correlation operators of both finitely and infinitely many
quantum particles is described by initial-value problem of the nonlinear quantum BBGKY hierarchy
\eqref{gBigfromDFBa} (or \eqref{eqG}). For finitely many particles the nonlinear quantum BBGKY hierarchy
is equivalent to the von Neumann hierarchy \eqref{vNh}.


\section{A nonperturbative solution of the nonlinear quantum BBGKY hierarchy}
In this section we construct a solution of the initial-value problem of the nonlinear quantum BBGKY
hierarchy (\ref{gBigfromDFBa}). A nonperturbative solution is represented in form of an expansion
over particle clusters which evolution is governed by the corresponding-order cumulant (semi-invariant)
of nonlinear groups of operators (\ref{rozvNh}) generated by the von Neumann hierarchy (\ref{vNh}).
A solution representation in the form of the perturbation (iteration) series of hierarchy (\ref{gBigfromDFBa})
is derived as a result of applying of analogs of the Duhamel equation to cumulants of groups of operators
(\ref{rozvNh}) of constructed solution.

\subsection{A solution in case of chaos initial data}
To construct a nonperturbative solution of the Cauchy problem (\ref{gBigfromDFBa})-(\ref{gBigfromDFBai})
of the nonlinear quantum BBGKY hierarchy we first consider its structure for physically motivated example
of initial data, namely, initial data satisfying a chaos property
\begin{eqnarray}\label{inG}
   &&G_s(t,Y)|_{t=0}=G_1(0,1)\delta_{s,1},\quad s\geq1,
\end{eqnarray}
where $\delta_{s,1}$ is a Kronecker symbol. Chaos property (\ref{inG}) means the absence of state
correlations in a system at the initial time.

According to definition (\ref{Gexpg}) and solution expansion (\ref{gth}), in the case under
consideration the following relation between the marginal correlation operators and correlation
operators is true
\begin{eqnarray}\label{Gg}
   &&G_1(0,i)=g_1(0,i).
\end{eqnarray}
Taking into account the form (\ref{gth}) of a solution of the initial-value problem of the
von Neumann hierarchy (\ref{vNh}) in case of initial data (\ref{gChaos}), for expansion
(\ref{Gexpg}) we obtain
\begin{eqnarray}\label{Gg(0)}
   &&G_{s}(t,Y)=\sum\limits_{n=0}^{\infty}\frac{1}{n!}
       \,\mathrm{Tr}_{s+1,\ldots, s+n}\,\mathfrak{A}_{s+n}(t,1,\ldots,s+n)\,
       \prod_{i=1}^{s+n}g_{1}(0,i),
\end{eqnarray}
where $\mathfrak{A}_{s+n}(t)$ is $(s+n)th$-order cumulant (\ref{cumulantP}). In consequence
of relation (\ref{Gg}) we finally derive
\begin{eqnarray}\label{GUG(0)}
   &&G_{s}(t,Y)=\sum\limits_{n=0}^{\infty}\frac{1}{n!}
       \,\mathrm{Tr}_{s+1,\ldots, s+n}\,\mathfrak{A}_{s+n}(t,1,\ldots,s+n)\,
       \prod_{i=1}^{s+n}G_{1}(0,i),\quad s\geq 1.
\end{eqnarray}
If $\|G_{1}(0)\|_{\mathfrak{L}^{1}(\mathcal{H})}\leq (2e)^{-1}$, series (\ref{GUG(0)}) converges,
since for cumulants (\ref{cumulantP}) the estimate holds \cite{GerShJ}
\begin{eqnarray*}
   &&\|\mathfrak{A}_{n}(t)f\|_{\mathfrak{L}^{1}(\mathcal{H}_n)}\leq\,n!\,e^{n}\,
      \|f\|_{\mathfrak{L}^{1}(\mathcal{H}_n)}.
\end{eqnarray*}
From the structure of series (\ref{GUG(0)}) it is clear that in case of absence of correlations
at initial instant in a system the correlations generated by the dynamics of quantum many-particle
systems are completely governed by cumulants (\ref{cumulantP}) of groups of operators (\ref{groupG}).

Thus, the cumulant structure of solution (\ref{rozvNh}) of the von Neumann hierarchy (\ref{vNh})
induces the cumulant structure of solution expansion (\ref{GUG(0)}) of the initial-value problem
of the quantum nonlinear BBGKY hierarchy for marginal correlation operators.

The evolution equations which satisfy expression (\ref{GUG(0)}) are derived similarly to the derivation
of hierarchy (\ref{gBigfromDFBa}) given in section 3.2 on the base of definition (\ref{Gg(0)}).

We note, that in case of initial data \eqref{gChaos} solution \eqref{gth} of the Cauchy problem
\eqref{vNh}-\eqref{vNhi} of the von Neumann hierarchy may be rewritten in another representation.
For $n=1$, we have
\begin{eqnarray*}
   &&g_{1}(t,1)=\mathfrak{A}_{1}(t,1)g_{1}(0,1).
 \end{eqnarray*}
Then, within the context of the definition of the first-order cumulant, $\mathfrak{A}_{1}(-t)$,
and the dual group of operators $\mathfrak{A}_{1}(t)$, we express the correlation operators
$g_{s}(t),\,s\geq 2$, in terms of the one-particle correlation operator $g_{1}(t)$ using formula
\eqref{gth}. Hence for $s\geq2$ formula \eqref{gth} is represented in the form of the functional
with respect to one-particle correlation operators
\begin{eqnarray*}
   &&g_{s}(t,Y\mid g_{1}(t))=\widehat{\mathfrak{A}}_{s}(t,Y)\,\prod_{i=1}^{s}\,g_{1}(t,i),\quad s\geq 2,
\end{eqnarray*}
where $\widehat{\mathfrak{A}}_{s}(t,Y)$ is $sth$-order cumulant \eqref{cumulantP} of the scattering
operators
\begin{eqnarray}\label{so}
   &&\widehat{\mathcal{G}}_{s}(t,Y)\doteq
      \mathcal{G}_{s}(-t,Y)\prod_{i=1}^{s}\mathcal{G}_{1}(t,i),\quad s\geq1.
\end{eqnarray}
The generator of the scattering operator $\widehat{\mathcal{G}}_{t}(Y)$ is determined by the operator
\begin{eqnarray*}
  &&\frac{d}{dt}\widehat{\mathcal{G}}_{s}(t,Y)|_{t=0}=\sum\limits_{k=2}^{s}\,\,
     \sum\limits_{i_{1}<\ldots<i_{k}=1}^{s}\,(-\mathcal{N}_{\mathrm{int}}^{(k)}(i_{1},\ldots,i_{k})),
\end{eqnarray*}
where the operator $(-\mathcal{N}_{\mathrm{int}}^{(k)})$ acts on $\mathfrak{L}^{1}_{0}
(\mathcal{H}_{s})\subset\mathfrak{L}^{1}(\mathcal{H}_{s})$ according to formula \eqref{Nintn}.

Similar representation of a solution holds for marginal correlation operators \eqref{GUG(0)} which
forms inherently the basis of the kinetic description of the evolution.
In this case the marginal correlation functionals $G_{s}\big(t,Y\mid G_{1}(t)\big),\,s\geq2$, are
represented by the expansions
\begin{eqnarray}\label{cf}
   &&\hskip-8mmG_{s}\big(t,Y\mid G_{1}(t)\big)=\sum\limits_{n=0}^{\infty}\frac{1}{n!}\,
      \mathrm{Tr}_{s+1,\ldots,s+n}\mathfrak{V}_{1+n}\big(t,\theta(\{Y\}),s+1,\ldots,s+n\big)
      \prod_{i=1}^{s+n}G_{1}(t,i),
\end{eqnarray}
where the operator $G_{1}(t,i)$ is given by \eqref{Gg(0)} for $s=1$, and we use the notion of the
declasterization mapping defined in section 2 \cite{DP}.
In expansion (\ref{cf}) the $(1+n)th$-order evolution operator $\mathfrak{V}_{1+n}(t)$ is defined by the
formula \cite{GT}

\begin{eqnarray*}
    &&\mathfrak{V}_{1+n}(t,\theta(\{Y\},X\setminus Y )\doteq n!\,\sum_{k=0}^{n}\,(-1)^k\,\sum_{n_1=1}^{n}\ldots
        \sum_{n_k=1}^{n-n_1-\ldots-n_{k-1}}\frac{1}{(n-n_1-\ldots-n_k)!}\times\\
    &&\times\widehat{\mathfrak{A}}_{s+n-n_1-\ldots-n_k}(t,\{Y\},s+1,\ldots,s+n-n_1-\ldots-n_k)\times\nonumber\\
    &&\times\prod_{j=1}^k\,\sum\limits_{\mbox{\scriptsize $\begin{array}{c}\mathrm{D}_{j}:Z_j=\bigcup_{l_j} X_{l_j},\\
        |\mathrm{D}_{j}|\leq s+n-n_1-\dots-n_j\end{array}$}}\frac{1}{|\mathrm{D}_{j}|!}
        \sum_{i_1\neq\ldots\neq i_{|\mathrm{D}_{j}|}=1}^{s+n-n_1-\ldots-n_j}\,\,
        \prod_{X_{l_j}\subset \mathrm{D}_{j}}\,\frac{1}{|X_{l_j}|!}\,\,
        \widehat{\mathfrak{A}}_{1+|X_{l_j}|}(t,i_{l_j},X_{l_j}),\nonumber
\end{eqnarray*}
where $\sum_{\mathrm{D}_{j}:Z_j=\bigcup_{l_j} X_{l_j}}$ is the sum over all possible dissections
$\mathrm{D}_{j}$ of the linearly ordered set $Z_j\equiv(s+n-n_1-\ldots-n_j+1,\ldots,s+n-n_1-\ldots-n_{j-1})$
on no more than $s+n-n_1-\ldots-n_j$ linearly ordered subsets, and the operator $\widehat{\mathfrak{A}}_{1+n}(t)$
is the $(1+n)$-order cumulant \eqref{cumulantP} of the scattering operators \eqref{so}. For example, the lower
orders evolution operators $\mathfrak{V}_{1+n}\big(t,\theta(\{Y\}),s+1,\ldots,s+n\big),\,n\geq0$, have the form
\begin{eqnarray*}
   &&\mathfrak{V}_{1}(t,\theta(\{Y\}))=\widehat{\mathfrak{A}}_{s}(t,\theta(\{Y\}),\\
   &&\mathfrak{V}_{2}(t,\theta(\{Y\}),s+1)=\widehat{\mathfrak{A}}_{s+1}(t,\theta(\{Y\}),s+1)-
       \widehat{\mathfrak{A}}_{s}(t,\theta(\{Y\}))\sum_{i=1}^s \widehat{\mathfrak{A}}_{2}(t,i,s+1),\nonumber
\end{eqnarray*}
and in case of $s=2$, it holds
\begin{eqnarray*}
    &&\mathfrak{V}_{1}(t,\theta(\{1,2\}))=\widehat{\mathcal{G}}_{2}(t,1,2)-I.
\end{eqnarray*}

We point out also that in case of chaos initial data solution expansion (\ref{RozvBBGKY}) of the quantum
BBGKY hierarchy (\ref{BBGKY}) for marginal density operators differs from solution expansion (\ref{GUG(0)})
of the nonlinear quantum BBGKY hierarchy (\ref{gBigfromDFBa}) for marginal correlation operators only by
the order of the cumulants of the groups of operators of the von Neumann equations \cite{GP},\cite{DP}
\begin{eqnarray}\label{FUg}
   &&F_{s}(t,Y)=\sum\limits_{n=0}^{\infty}\frac{1}{n!}
       \,\mathrm{Tr}_{s+1,\ldots, s+n}\,\mathfrak{A}_{1+n}(t,\{Y\},X\setminus Y)
       \,\prod_{i=1}^{s+n}F_{1}(0,i), \quad s\geq 1,
\end{eqnarray}
where $\mathfrak{A}_{1+n}(t)$ is the $(1+n)th$-order cumulant (\ref{cumulant1+n}). Series (\ref{FUg})
converges under the condition: $\|F_{1}(0)\|_{\mathfrak{L}^{1}(\mathcal{H})}\leq e^{-1}$.

\subsection{The structure of a nonperturbative solution expansion}
The direct method of the construction of a solution of the nonlinear quantum BBGKY hierarchy
\eqref{gBigfromDFBa} in the form of nonperturbative expansion consists in its derivation on
the basis of expansions \eqref{gBigfromDFB} from nonperturbative solution \eqref{RozvBBGKY} of
initial-value problem of the quantum BBGKY hierarchy \eqref{BBGKY}-\eqref{BBGKYi}. Following stated
above approach, we derive a formula for a solution of the quantum nonlinear BBGKY hierarchy for
marginal correlation operators in case of general initial data on the basis of definition \eqref{Gexpg}
and nonperturbative solution \eqref{rozvNh} of initial-value problem of the von Neumann hierarchy
\eqref{vNh}-\eqref{vNhi}. With this aim on $f_{n}\in\mathfrak{L}^{1}(\mathcal{H}_{n})$ we introduce
an analogue of the annihilation operator
\begin{eqnarray}\label{a}
    &&(\mathfrak{a}f)_{s}(1,\ldots,s)\doteq \mathrm{Tr}_{s+1}f_{s+1}(1,\ldots,s,s+1),\quad s\geq1,
\end{eqnarray}
and, therefore we have
\begin{eqnarray*}
     &&(e^{\pm\mathfrak{a}}f)_{s}(1,\ldots,s)=\sum\limits_{n=0}^{\infty}\frac{(\pm 1)^n}{n!}
        \mathrm{Tr}_{s+1,\ldots,{s+n}}f_{s+n}(1,\ldots,s+n).
\end{eqnarray*}
According to definition (\ref{Gexpg}) of the marginal correlation operators, i.e.
\begin{eqnarray*}
    &&G(t)=e^\mathfrak{a}g(t),
\end{eqnarray*}
where the sequence $g(t)$ is a solution of the von Neumann hierarchy for correlation operators
defined by group (\ref{rozvNh}), i.e. $g(t)=\mathcal{G}(t\mid g(0))$,
and to the equality: $g(0)=e^\mathfrak{-a}G(0)$, we finally derive
\begin{eqnarray}\label{s}
    &&G(t)=e^\mathfrak{a}\mathcal{G}(t\mid e^\mathfrak{-a}G(0)).
\end{eqnarray}

To set down formula (\ref{s}) in componentwise form we observe, that the following equality holds
\begin{eqnarray}\label{prod}
    &&\prod\limits_{X_i\subset \mathrm{P}}(e^{\mathfrak{-a}}G(0))_{|X_i|}(X_i)=
       \sum\limits_{k=0}^{\infty}\frac{(-1)^k}{k!}\mathrm{Tr}_{s+n+1,\ldots,s+n+k}
       \sum\limits_{k_1=0}^{k}\frac{k!}{k_{1}!(k-k_{1})!}\ldots\\
    &&\ldots\sum\limits_{k_{|\mathrm{P}|-1}=0}^{k_{|\mathrm{P}|-2}}
       \frac{k_{|\mathrm{P}|-2}!}{k_{|\mathrm{P}|-1}!(k_{|\mathrm{P}|-2}-k_{|\mathrm{P}|-1})!}
       G_{|X_1|+k-k_1}(0,X_1,s+n+1,\ldots,s+n+k-k_1)\ldots\nonumber\\
    &&\ldots G_{|X_{|\mathrm{P}|}|+k_{|\mathrm{P}|-1}}(0,X_{|\mathrm{P}|},s+n+k-k_{|\mathrm{P}|-1}+1,\ldots,s+n+k).\nonumber
\end{eqnarray}
Then according to formulas (\ref{s}) and (\ref{rozvNh}), for $s\geq1$ we have
\begin{eqnarray*}
    &&\hskip-8mmG_{s}(t,Y)=\sum\limits_{n=0}^{\infty}\frac{1}{n!}
        \,\mathrm{Tr}_{s+1,\ldots, s+n}\,\sum\limits_{\mathrm{P}:\,(1,\ldots, s+n)=\bigcup_i X_i}
        \mathfrak{A}_{|\mathrm{P}|}\big(t,\{X_1\},\ldots,\{X_{|\mathrm{P}|}\}\big)
        \prod\limits_{X_i\subset \mathrm{P}}(e^{\mathfrak{-a}}G(0))_{|X_i|}(X_i),
\end{eqnarray*}
where $\mathfrak{A}_{|\mathrm{P}|}(t)$ is $|\mathrm{P}|th$-order cumulant (\ref{cumulantP}),
and as a result of the validity of equality (\ref{prod}) for sequence (\ref{s}) we obtain
\begin{eqnarray*}\label{ss}
    &&\hskip-5mmG_{s}(t,1,\ldots,s)=\\
    &&\hskip-5mm=\sum\limits_{n=0}^{\infty}\frac{1}{n!}\,\mathrm{Tr}_{s+1,\ldots, s+n}\,
        \sum\limits_{k=0}^{n}(-1)^{k}\frac{n!}{k!(n-k)!}\,
        \sum\limits_{\mathrm{P}:\,(1,\ldots,s+n-k)=\bigcup_i X_i}
        \mathfrak{A}_{|\mathrm{P}|}\big(t,\{X_1\},\ldots,\{X_{|\mathrm{P}|}\}\big)\times\\
    &&\hskip-5mm\times \sum\limits_{k_1=0}^{k}\frac{k!}{k_{1}!(k-k_{1})!}\ldots
        \sum\limits_{k_{|\mathrm{P}|-1}=0}^{k_{|\mathrm{P}|-2}}
        \frac{k_{|\mathrm{P}|-2}!}{k_{|\mathrm{P}|-1}!(k_{|\mathrm{P}|-2}-k_{|\mathrm{P}|-1})!}\,
        G_{|X_1|+k-k_1}(0,X_1,\\
    &&\hskip-5mm s+n-k+1,\ldots,s+n-k_1)\ldots G_{|X_{|\mathrm{P}|}|+k_{|\mathrm{P}|-1}}(0,X_{|\mathrm{P}|},
        s+n-k_{|\mathrm{P}|-1}+1,\ldots,s+n).
\end{eqnarray*}

Consequently the solution expansion of the nonlinear quantum BBGKY hierarchy has the following structure
\begin{eqnarray}\label{sss}
    &&G_{s}(t,Y)=\sum\limits_{n=0}^{\infty}\frac{1}{n!}
        \,\mathrm{Tr}_{s+1,\ldots,s+n}\,U_{1+n}(t;\{Y\},s+1,\ldots,s+n\mid G(0)),\quad s\geq1,
\end{eqnarray}
where we introduce the notion of the $(1+n)th$-order reduced cumulant $U_{1+n}(t)$ of nonlinear groups
of operators (\ref{rozvNh})
\begin{eqnarray}\label{ssss}
    &&U_{1+n}(t;\{Y\},s+1,\ldots,s+n \mid G(0))\doteq\\
    &&\doteq\sum\limits_{k=0}^{n}(-1)^k \frac{n!}{k!(n-k)!}\,\sum\limits_{\mathrm{P}:\,
       (\theta(\{1,\ldots,s\}),s+1,\ldots,s+n-k)=\bigcup_i X_i}
       \mathfrak{A}_{|\mathrm{P}|}\big(t,\{X_1\},\ldots,\{X_{|\mathrm{P}|}\}\big)\times\nonumber\\
    &&\times \sum\limits_{k_1=0}^{k}\frac{k!}{k_{1}!(k-k_{1})!}\ldots
       \sum\limits_{k_{|\mathrm{P}|-1}=0}^{k_{|\mathrm{P}|-2}}
       \frac{k_{|\mathrm{P}|-2}!}{k_{|\mathrm{P}|-1}!(k_{|\mathrm{P}|-2}-k_{|\mathrm{P}|-1})!}
       G_{|X_1|+k-k_1}(0,X_1,\nonumber\\
    &&s+n-k+1,\ldots,s+n-k_1)\ldots G_{|X_{|\mathrm{P}|}|+k_{|\mathrm{P}|-1}}(0,X_{|\mathrm{P}|},
       s+n-k_{|\mathrm{P}|-1}+1,\ldots,s+n).\nonumber
\end{eqnarray}
We give simplest examples of reduced nonlinear cumulants (\ref{ssss}):
\begin{eqnarray*}
    &&U_{1}(t;\{Y\}\mid G(0))=\mathcal{G}(t;Y\mid G(0))=\\
    &&=\sum\limits_{\mathrm{P}:\,(1,\ldots,s)=
        \bigcup_i X_i}\mathfrak{A}_{|\mathrm{P}|}\big(t,\{X_1\},\ldots,\{X_{|\mathrm{P}|}\}\big)
        \prod\limits_{X_i\subset\mathrm{P}}G_{|X_i|}(0,X_{i}),\\
    &&U_{2}(t;\{Y\},s+1 \mid G(0))=\sum\limits_{\mathrm{P}:\,(Y,s+1)=\bigcup_i X_i}
        \mathfrak{A}_{|\mathrm{P}|}\big(t,\{X_1\},\ldots,\{X_{|\mathrm{P}|}\}\big)
        \prod\limits_{X_i\subset\mathrm{P}}G_{|X_i|}(0,X_{i})-\\
    &&-\sum\limits_{\mathrm{P}:\,(1,\ldots,s)=\bigcup_i X_i}
        \mathfrak{A}_{|\mathrm{P}|}\big(t,\{X_1\},\ldots,
        \{X_{|\mathrm{P}|}\}\big)\sum_{j=1}^{|\mathrm{P}|}
        \prod\limits_{\mbox{\scriptsize $\begin{array}{c}{X_i\subset\mathrm{P}},
        \\X_i\neq X_j\end{array}$}}G_{|X_i|}(0,X_{i})G_{|X_j|+1}(0,X_{j},s+1).
\end{eqnarray*}

We remark that in case of solution expansion (\ref{RozvBBGKY}) of the quantum BBGKY hierarchy, an
analog of reduced cumulant (\ref{ssss}) is the reduced cumulant of groups of operators (\ref{groupG})
defined by formula \cite{Pe95}
\begin{eqnarray*}
    &&U_{1+n}(t;\{Y\},s+1,\ldots,s+n)\doteq
       \sum\limits_{k=0}^{n}(-1)^k \frac{n!}{k!(n-k)!}\mathcal{G}_{s+n-k}(-t).
\end{eqnarray*}

\subsection{Reduced cumulants of nonlinear groups of operators}
We indicate some properties of reduced nonlinear cumulants (\ref{ssss}) of groups of operators
(\ref{rozvNh}). According to formula (\ref{sss}) and properties of cumulants (\ref{cumulantP}),
namely $\mathfrak{A}_n(0)=I\delta_{n,1}$, the following equality holds
\begin{eqnarray*}
    &&U_{1+n}(0;\{Y\},s+1,\ldots,s+n\mid G(0))=\\
    &&=\sum\limits_{k=0}^{n}(-1)^k \frac{n!}{k!(n-k)!}\,
       \mathfrak{A}_{1}\big(0,\{1,\ldots,s+n-k\}\big)G_{s+n}(0,1,\ldots,s+n)=\nonumber\\
    &&=G_{s+n}(0,1,\ldots,s+n)\delta_{n,0},\nonumber
\end{eqnarray*}
and hence the marginal correlation operators determined by series (\ref{sss}) satisfy initial
data (\ref{gBigfromDFBai}).

In case of $n=0$ for $f\in\mathfrak{L}^{1}_0(\mathcal{F}_\mathcal{H})$ in the sense of the norm
convergence of the space $\mathfrak{L}^{1}(\mathcal{H}_s)$ the infinitesimal generator of first-order
reduced cumulant (\ref{ssss}) coincides with generator (\ref{vNgenerator}) of the von Neumann hierarchy (\ref{vNh})
\begin{eqnarray*}
   &&\lim\limits_{t\rightarrow 0}\frac{1}{t}(U_{1}(t;\{Y\}\mid f)-f_{s}(Y))=
      \mathcal{N}(Y\mid f),\quad s\geq1,
\end{eqnarray*}
where the operator $\mathcal{N}(Y\mid f)$ is defined by formula (\ref{vNgenerator}). In case of $n=1$
for second-order reduced cumulant (\ref{ssss}) in the same sense we obtain the following equality
\begin{eqnarray*}
   &&\mathrm{Tr}_{s+1}\lim\limits_{t\rightarrow 0}\frac{1}{t}\,U_{2}(t;\{Y\},s+1\mid f)
     =\sum_{i\in Y}\mathrm{Tr}_{s+1}(-\mathcal{N}_{\mathrm{int}}(i,s+1))\big(f_{s+1}(t,Y,s+1)+\\
   &&+\sum_{\mbox{\scriptsize$\begin{array}{c}\mathrm{P}:(Y,s+1)=X_1\bigcup X_2,\\i\in
     X_1;s+1\in X_2\end{array}$}}f_{|X_1|}(t,X_1)f_{|X_2|}(t,X_2)\big),
\end{eqnarray*}
where notations are used as above for hierarchy (\ref{gBigfromDFBa}), and for $n\geq2$ as a consequence
of the fact that we consider a system of particles interacting by a two-body potential, it holds
\begin{eqnarray*}
  &&\mathrm{Tr}_{s+1,\ldots,s+n}\lim\limits_{t\rightarrow 0}\frac{1}{t}
     \,U_{1+n}(t;\{Y\},s+1,\ldots,s+n\mid f)=0.
\end{eqnarray*}

In case of initial data satisfying a chaos property, i.e. $G^{(1)}(0)\equiv(0,G_{1}(0,1),0,\ldots)$,
for the $(1+n)th$-order reduced cumulant we have
\begin{eqnarray*}
  &&U_{1+n}(t;\{Y\},s+1,\ldots,s+n \mid G^{(1)}(0))=
     \mathfrak{A}_{s+n}\big(t,1,\ldots,s+n\big)\prod\limits_{i=1}^{s+n}G_{1}(0,i),
\end{eqnarray*}
i.e. the only summand that gives contribution to the result is the one with $k=0$ and $|\mathrm{P}|=s+n$,
since otherwise there is at least one operator $G_s(0)$ with $s \geq 2$ in the last product. We note that
in section 4.1 the same result was obtained using the properties of solution of the von Neumann hierarchy (\ref{vNh}).

For the $(1+n)th$-order reduced cumulant (\ref{ssss}) the following inequality holds
\begin{eqnarray}\label{UMapping}
   &&\big\|U_{1+n}(t;\{Y\},s+1,\ldots,s+n \mid f)\big\|_{\mathfrak{L}^{1}(\mathcal{H}_{s+n})}\leq
      2n!s!(2e^{3})^{s+n}\emph{c}^{s+n},
\end{eqnarray}
where $\emph{c}\equiv\max\limits_{\mbox{\scriptsize $\begin{array}{c}\mathrm{P}:\,(1,\ldots,s+n-k)=
\bigcup_i X_i\end{array}$}}\max\limits_{\mbox{\scriptsize$\begin{array}{c}k,k_1,\ldots,k_{|\mathrm{P}|-1}\in
\\ \in(s+n-k+1,\ldots, s+n)\end{array}$}}\big(\|f_{|X_1|+k-k_1}\|_{\mathfrak{L}^{1}(\mathcal{H}_{|X_1|+k-k_1})},
\ldots\\ \ldots,\|f_{|X_{|\mathrm{P}|}|+k_{|\mathrm{P}|-1}}\|_{\mathfrak{L}^{1}(\mathcal{H}_{|X_{|\mathrm{P}|}|+
k_{|\mathrm{P}|-1}})}\big)$.

To prove this inequality we first remark that for cumulant (\ref{cumulantP}) the following estimate holds
\begin{eqnarray}\label{cumulantEstimate}
   &&\|\mathfrak{A}_{|\mathrm{P}|}(t,\{X_1\},\ldots,\{X_{|\mathrm{P}|}\})f_n\|_{\mathfrak{L}^{1}(\mathcal{H}_{n})}
      \leq|\mathrm{P}|! \,e^{|\mathrm{P}|}\|f_n\|_{\mathfrak{L}^{1}(\mathcal{H}_{n})}.
\end{eqnarray}
Indeed, we have
\begin{eqnarray*}
  &&\|\mathfrak{A}_{|\mathrm{P}|}(t,\{X_1\},\ldots,
     \{X_{|\mathrm{P}|}\})f_n\|_{\mathfrak{L}^{1}(\mathcal{H}_{n})}\leq\\
  &&\leq\sum\limits_{\mathrm{P}^{'}:\,(\{X_1\},\ldots,\{X_{|\mathrm{P}|}\})=
     \bigcup_k Z_k}({|\mathrm{P}^{'}|-1})!\|\prod\limits_{Z_k\subset\mathrm{P}^{'}}
     \mathcal{G}_{|\theta(Z_{k})|}(-t,\theta(Z_{k}))f_n\|_{\mathfrak{L}^{1}(\mathcal{H}_{n})}=\nonumber\\
  &&=\|f_n\|_{\mathfrak{L}^{1}(\mathcal{H}_{n})}\sum_{l=1}^{|\mathrm{P}|}\mathrm{s}(|\mathrm{P}|,l)(l-1)!,
\end{eqnarray*}
where $\mathrm{s}(|\mathrm{P}|,l)$ are the Stirling numbers of second kind and we use the isometric property
of the groups $\mathcal{G}_n(-t),\, n\geq1$. Estimate (\ref{cumulantEstimate}) holds as a consequence of
the inequality
\begin{eqnarray*}
    &&\sum_{l=1}^{|\mathrm{P}|} \mathrm{s}(|\mathrm{P}|,l)(l-1)!\leq|\mathrm{P}|!\,e^{|\mathrm{P}|}.
\end{eqnarray*}

Then owing to estimate (\ref{cumulantEstimate}), for the $(1+n)th$-order reduced cumulant (\ref{ssss})
we have
\begin{eqnarray*}
   &&\big\|U_{1+n}(t;\{Y\},s+1,\ldots,s+n\mid f)\big\|_{\mathfrak{L}^{1}(\mathcal{H}_{s+n})}\leq\\
   &&\leq\sum\limits_{k=0}^{n}\frac{n!}{k!(n-k)!}\,\sum\limits_{\mathrm{P}:\,(1,\ldots,s+n-k)=\bigcup_i X_i}
       |\mathrm{P}|!\,e^{|\mathrm{P}|}\sum\limits_{k_1=0}^{k}\frac{k!}{k_{1}!(k-k_{1})!}\ldots\nonumber\\
   &&\sum\limits_{k_{|\mathrm{P}|-1}=0}^{k_{|\mathrm{P}|-2}}
       \frac{k_{|\mathrm{P}|-2}!}{k_{|\mathrm{P}|-1}!(k_{|\mathrm{P}|-2}-k_{|\mathrm{P}|-1})!}
       \|f_{|X_1|+k-k_1}\|_{\mathfrak{L}^{1}(\mathcal{H}_{|X_1|+k-k_1})}
       \ldots\|f_{|X_{|\mathrm{P}|}|+k_{|\mathrm{P}|-1}}\|_{\mathfrak{L}^{1}(\mathcal{H}_{|X_{|\mathrm{P}|}|+
       k_{|\mathrm{P}|-1}})}\nonumber\leq\\
   &&\leq\sum\limits_{k=0}^{n}\frac{n!}{(n-k)!}\,\sum\limits_{\mathrm{P}:\,(1,\ldots,s+n-k)=\bigcup_i X_i}
       |\mathrm{P}|!\,e^{2|\mathrm{P}|-1}\emph{c}^{|\mathrm{P}|}.
\end{eqnarray*}
As result of using of the definition of the Stirling numbers of second kind $\mathrm{s}(s+n-k,l)$
and the inequalities
\begin{eqnarray*}
   &&\sum\limits_{k=0}^{n}\frac{n!}{(n-k)!}\,\sum\limits_{\mathrm{P}:\,(1,\ldots,s+n-k)=
      \bigcup_i X_i}|\mathrm{P}|!\,e^{2|\mathrm{P}|-1}=\sum\limits_{k=0}^{n}\frac{n!}{(n-k)!}\,
      \sum\limits_{l=1}^{s+n-k}\mathrm{s}(s+n-k,l)l!e^{2l-1}\leq\\
   &&\leq\sum\limits_{k=0}^{n}\frac{n!(s+n-k)!}{(n-k)!}e^{3(s+n-k)}\leq 2n!s!(2e^{3})^{s+n},
\end{eqnarray*}
we obtain estimate (\ref{UMapping}).

Thus, according to estimate (\ref{UMapping}), for initial data from the space
$\mathfrak{L}^{1}(\mathcal{H}_{n})$ series (\ref{sss}) converges under the condition:
$\mathfrak{c}\equiv\max_{n\geq1}\big\|G_n(0)\big\|_{\mathfrak{L}^{1}(\mathcal{H}_{n})}<(2e^{3})^{-1}$,
and the following inequality holds
\begin{eqnarray}\label{solEstimate}
   &&\big\|G_s(t)\big\|_{\mathfrak{L}^{1}(\mathcal{H}_{s})}\leq
      2s!(2e^{3}\mathfrak{c})^{s}\sum_{n=0}^{\infty}(2e^{3})^{n}\mathfrak{c}^{n}.
\end{eqnarray}

A solution of the Cauchy problem of the nonlinear quantum BBGKY hierarchy for marginal
correlation operators (\ref{gBigfromDFBa}) is determined by the following one-parametric mapping
\begin{eqnarray}\label{nhg}
   &&\mathbb{R}\ni t\rightarrow \mathcal{U}(t\mid f)=
      e^\mathfrak{a}\mathcal{G}(t\mid e^\mathfrak{-a}f),
\end{eqnarray}
which is defined on the space $\mathfrak{L}^{1}(\mathcal{F}_\mathcal{H})$ owing to estimate
(\ref{solEstimate}), and has the group property
\begin{eqnarray*}
   &&\mathcal{U}\big(t_1\mid \mathcal{U}(t_2\mid f)\big)=
      \mathcal{U}\big(t_2\mid \mathcal{U}(t_1\mid f)\big)=\mathcal{U}\big(t_1+t_2\mid f\big).
\end{eqnarray*}
Indeed, according to definition (\ref{a}) and taking into consideration the group property of the
mapping $\mathcal{G}(t\mid \cdot)$, we obtain
\begin{eqnarray*}
   &&\mathcal{U}\big(t_1+t_2\mid f\big)=e^\mathfrak{a}\mathcal{G}(t_1+t_2\mid e^\mathfrak{-a}f)=
      e^\mathfrak{a}\mathcal{G}(t_1\mid \mathcal{G}(t_2 \mid e^\mathfrak{-a}f)=\\
   &&=e^\mathfrak{a}\mathcal{G}(t_1\mid e^\mathfrak{-a}e^\mathfrak{a}
      \mathcal{G}(t_2\mid e^\mathfrak{-a}f) =
      e^\mathfrak{a}\mathcal{G}(t_1\mid e^\mathfrak{-a}\mathcal{U}\big(t_2\mid f\big))=
      \mathcal{U}\big(t_1\mid \mathcal{U}(t_2\mid f)\big).
\end{eqnarray*}

To construct the generator of the strong continuous group $\mathcal{U}(t;Y\mid\cdot)$ we differentiate
it in the sense of the norm convergence on the space $\mathfrak{L}^{1}(\mathcal{H}_s)$
\begin{eqnarray*}
   &&\frac{d}{dt}\mathcal{U}(t;Y\mid f)|_{t=0}=
      \frac{d}{dt}(e^\mathfrak{a}\mathcal{G}(t\mid e^\mathfrak{-a}f))_s(Y)|_{t=0}=\\
   &&=\sum\limits_{n=0}^{\infty}\frac{1}{n!}\,\mathrm{Tr}_{s+1,\ldots,s+n}\,
      \mathcal{N}(X\mid\mathcal{G}(t\mid e^\mathfrak{-a}f))|_{t=0}
      =(e^\mathfrak{a}\mathcal{N}(\cdot\mid e^{-\mathfrak{a}}f))_s(Y),
\end{eqnarray*}
where $\mathcal{N}(\cdot\mid f)$ is a generator of the von Neumann hierarchy (\ref{vNh}) defined
by formula (\ref{vNgenerator}) on the subspaces $\mathfrak{L}^{1}_{0}(\mathcal{H}_{s})\subset
\mathfrak{L}^{1}(\mathcal{H}_{s}),\,s\geq1$, or in the componentwise form
\begin{eqnarray}\label{gennh}
   &&(e^{\mathfrak{a}}\mathcal{N}(\cdot\mid e^{-\mathfrak{a}}f))_s(Y)=\mathcal{N}(Y\mid f)+
      \mathrm{Tr}_{s+1}\sum_{i\in Y}(-\mathcal{N}_{\mathrm{int}}(i,s+1))\big(f_{s+1}(Y,s+1)+\\
   &&+\sum_{\mbox{\scriptsize $\begin{array}{c}\mathrm{P}:(Y,s+1)=
      X_1\bigcup X_2,\\i\in X_1;s+1\in X_2\end{array}$}}f_{|X_1|}(X_1)f_{|X_2|}(X_2)\big),\nonumber
\end{eqnarray}
where we use notations as above for formula (\ref{gBigfromDFBa}), and transformations similar
to equalities (\ref{d1}) and (\ref{d2}) have been applied.

Indeed, to set down a generator of mapping (\ref{nhg}) in componentwise form we observe that
according to definitions (\ref{a}) and (\ref{vNgenerator}), the following equality holds
\begin{eqnarray*}
   &&(e^{\mathfrak{a}}\mathcal{N}(\cdot\mid e^{-\mathfrak{a}}f))_s(Y)=
       \sum\limits_{n=0}^{\infty}\frac{1}{n!}\,\mathrm{Tr}_{s+1,\ldots,s+n}\,
       \big(-\mathcal{N}_{s+n}(X)(e^\mathfrak{-a}f)_{s+n}(X)+\\
   &&+\sum\limits_{\mathrm{P}:\,X=X_{1}\bigcup X_2}\,\sum\limits_{i_{1}\in X_{1}}\sum\limits_{i_{2}\in X_{2}}
      (-\mathcal{N}_{\mathrm{int}}(i_{1},i_{2}))(e^\mathfrak{-a}f)_{|X_{1}|}(X_{1})
      (e^\mathfrak{-a}f)_{|X_{2}|}(X_{2})\big).
\end{eqnarray*}
Then in view of formulas (\ref{a}) and (\ref{prod}) we have
\begin{eqnarray*}
    &&\hskip-5mm(e^{\mathfrak{-a}}f)_{|X_1|}(X_1)(e^{\mathfrak{-a}}f)_{|X_2|}(X_2)=
       \sum\limits_{k=0}^{\infty}\frac{(-1)^k}{k!}\mathrm{Tr}_{s+n+1,\ldots,s+n+k}
       \sum\limits_{k_1=0}^{k}\frac{k!}{k_{1}!(k-k_{1})!}f_{|X_1|+k-k_1}(X_1,\\
    &&\hskip-5mms+n+1,\ldots,s+n+k-k_1)f_{|X_{2}|+k_{1}}(X_{2},s+n+k-k_{1}+1,\ldots,s+n+k),
\end{eqnarray*}
and as a result we obtain
\begin{eqnarray*}
   &&\hskip-5mm(e^{\mathfrak{a}}\mathcal{N}(\cdot\mid e^{-\mathfrak{a}}f))_s(1,\ldots,s)=\\
   &&\hskip-5mm=\sum\limits_{n=0}^{\infty}\frac{1}{n!}\,\mathrm{Tr}_{s+1,\ldots,s+n}
      \sum\limits_{k=0}^{n}(-1)^k\frac{n!}{k!(n-k)!}\,
      \big(-\mathcal{N}_{s+n-k}(1,\ldots,s+n-k)f_{s+n}(X)+\\
   &&\hskip-5mm+\sum\limits_{\mathrm{P}:\,(1,\ldots,s+n-k)=X_{1}\bigcup X_2}\,
      \sum\limits_{i_{1}\in X_{1}}\sum\limits_{i_{2}\in X_{2}}
      (-\mathcal{N}_{\mathrm{int}}(i_{1},i_{2}))\sum\limits_{k_1=0}^{k}\frac{k!}{k_{1}!(k-k_{1})!}\times\\
   &&\hskip-5mm\times f_{|X_1|+k-k_1}(X_1,s+n-k+1,\ldots,s+n-k_1)f_{|X_{2}|+k_{1}}(X_{2},s+n-k_{1}+1,\ldots,s+n)\big).
\end{eqnarray*}

Therefore the first term of this series is generator (\ref{vNgenerator}) of the von Neumann hierarchy
\begin{eqnarray*}
   &&I_1\equiv(-\mathcal{N}_s)f_s(Y)+\sum\limits_{\mathrm{P}:\,Y=X_{1}\bigcup X_2}
      \sum\limits_{i_{1}\in X_{1}}\sum\limits_{i_{2}\in X_{2}}
      (-\mathcal{N}_{\mathrm{int}}(i_1,i_2))f_{|X_1|}(X_1)f_{|X_2|}(X_2),
\end{eqnarray*}
and, as stated above, it coincides with the generator of the first order cumulant (\ref{ssss}).
For the second term of series (\ref{gennh}) we have
\begin{eqnarray*}
   &&I_2\equiv\mathrm{Tr}_{s+1}\big((-\mathcal{N}_{s+1}(Y,s+1))f_{s+1}-(-\mathcal{N}_s(Y))f_{s+1}+\\
   &&+\sum\limits_{\mathrm{P}:\,(Y,s+1)=X_{1}\bigcup X_2}\sum\limits_{i_{1}\in X_{1}}\sum\limits_{i_{2}
      \in X_{2}}(-\mathcal{N}_{\mathrm{int}}(i_1,i_2))f_{|X_1|}(X_1)f_{|X_2|}(X_2)-\\
   &&-\sum\limits_{\mathrm{P}:\,Y=X_{1}\bigcup X_2}\sum\limits_{i_{1}\in X_{1}}\sum\limits_{i_{2}\in
      X_{2}}(-\mathcal{N}_{\mathrm{int}}(i_1,i_2))(f_{|X_1|+1}(X_1,s+1)f_{|X_2|}(X_2)+\\
   &&+f_{|X_1|}(X_1)f_{|X_2|+1}(X_2,s+1))\big).
\end{eqnarray*}
Taking into account equality (\ref{sums}) in case of the set $(Y,s+1)$, it holds
\begin{eqnarray*}
   &&\hskip-7mm\sum\limits_{\mathrm{P}:\,(Y,s+1)=X_{1}\bigcup X_2}\sum\limits_{i_{1}\in X_{1}}\sum\limits_{i_{2}
      \in X_{2}}(-\mathcal{N}_{\mathrm{int}}(i_1,i_2))f_{|X_1|}(X_1)f_{|X_2|}(X_2)=\\
   &&\hskip-7mm=\sum\limits_{\mathrm{P}:\,(Y,s+1)=X_{1}\bigcup X_2}\sum\limits_{i_{1}\in X_{1}\bigcap Y}
      \sum\limits_{i_{2}\in X_{2}\bigcap Y}(-\mathcal{N}_{\mathrm{int}}(i_1,i_2))f_{|X_1|}(X_1)f_{|X_2|}(X_2)+\\
   &&\hskip-7mm+\sum\limits_{\mbox{\scriptsize$\begin{array}{c}\mathrm{P}:\,(Y,s+1)=X_{1}\bigcup X_2,\\
      s+1\in X_2 \end{array}$}}\sum\limits_{i_{1}\in X_{1}\bigcap Y}
      (-\mathcal{N}_{\mathrm{int}}(i_1,s+1))f_{|X_1|}(X_1)f_{|X_2|}(X_2)=\\
   &&\hskip-7mm=\sum\limits_{\mathrm{P}:\,Y=Y_{1}\bigcup Y_2}\sum\limits_{i_{1}\in Y_{1}}
      \sum\limits_{i_{2}\in Y_{2}}(-\mathcal{N}_{\mathrm{int}}(i_1,i_2))(f_{|Y_1|+1}(Y_1,s+1)f_{|Y_2|}(Y_2)+\\
   &&\hskip-7mm+f_{|Y_1|}(Y_1)f_{|Y_2|+1}(Y_2,s+1))+\sum\limits_{i\in Y}(-\mathcal{N}_{\mathrm{int}}(i,s+1))
      \hskip-2mm\sum\limits_{\mbox{\scriptsize$\begin{array}{c}\mathrm{P}:\,(Y,s+1)=X_{1}\bigcup X_2,\\
      i\in X_1;s+1\in X_2 \end{array}$}}f_{|X_1|}(X_1)f_{|X_2|}(X_2),
\end{eqnarray*}
where $\sum_{\mbox{\scriptsize$\begin{array}{c}\mathrm{P}:(Y,s+1)=X_1\bigcup X_2,\\s+1\in X_2\end{array}$}}$
is the sum over all possible partitions of the set $(Y,s+1)$ into two mutually disjoint subsets $X_1$ and
$X_2$ such that $(s+1)th$ particle index belongs to set $X_2$. As a result we obtain
\begin{eqnarray*}
   &&I_2=\mathrm{Tr}_{s+1}\big(\sum\limits_{i\in Y}(-\mathcal{N}_{\mathrm{int}}(i,s+1))f_{s+1}+\\
   &&+\sum\limits_{i\in Y}(-\mathcal{N}_{\mathrm{int}}(i,s+1))
      \sum\limits_{\mbox{\scriptsize$\begin{array}{c}\mathrm{P}:\,(Y,s+1)=X_{1}\bigcup X_2,\\
      i\in X_1; s+1 \in X_2 \end{array}$}}f_{|X_1|}(X_1)f_{|X_2|}(X_2)\big),
\end{eqnarray*}
i.e. this term coincides with the generator of second-order cumulant (\ref{ssss}).

In case of a two-body interaction potential other terms of series (\ref{gennh}) are
identically equal to zero. This statement is a consequence of the structure of expansion
(\ref{gennh}) and of the fact that its third term equals zero. Indeed as a result of
regrouping terms in the expression of the third term we obtain
\begin{eqnarray*}
   &&I_3=\frac{1}{2!}\mathrm{Tr}_{s+1,s+2}\big((-\mathcal{N}_{s+2}(Y,s+1,s+2)-
     2!(-\mathcal{N}_{s+1}(Y,s+1))+(-\mathcal{N}_{s}(Y)))f_{s+2}+\\
   &&+\sum\limits_{\mathrm{P}:\,(Y,s+1,s+2)=X_{1}\bigcup X_2}
     \sum\limits_{i_{1}\in X_{1}\bigcap Y}\sum\limits_{i_{2}\in X_{2}\bigcap
     Y}(-\mathcal{N}_{\mathrm{int}}(i_1,i_2))f_{|X_1|}(X_1)f_{|X_2|}(X_2)-\\
   &&-2!\sum\limits_{\mathrm{P}:\,(Y,s+1)=X_{1}\bigcup X_2}\sum\limits_{i_{1}\in X_{1}\bigcap Y}
     \sum\limits_{i_{2}\in X_{2}\bigcap Y}(-\mathcal{N}_{\mathrm{int}}(i_1,i_2))
     (f_{|X_1|+1}(X_1,s+1)f_{|X_2|}(X_2)+\\
   &&+f_{|X_1|}(X_1)f_{|X_2|+1}(X_2,s+1))+\\
   &&+\sum\limits_{\mathrm{P}:\,Y=X_{1}\bigcup X_2}\sum\limits_{i_{1}\in X_{1}\bigcap Y}
     \sum\limits_{i_{2}\in X_{2}\bigcap Y}(-\mathcal{N}_{\mathrm{int}}(i_1,i_2))
     (f_{|X_1|+2}(X_1,s+1,s+2)f_{|X_2|}(X_2)+\\
   &&+f_{|X_1|}(X_1)f_{|X_2|+2}(X_2,s+1,s+2)+2!f_{|X_1|+1}(X_1,s+1)f_{|X_2|+1}(X_2,s+2))\big)=0.
\end{eqnarray*}

Thus, we conclude the validity of formula (\ref{gennh}) in case of a two-body interaction potential
which describes the structure of the infinitesimal generator of mapping (\ref{nhg}) in the general case.

\subsection{An existence theorem}
For an abstract initial-value problem of hierarchy (\ref{gBigfromDFBa}) in the space
$\mathfrak{L}^{1}(\mathcal{F}_\mathcal{H})$ the following theorem is true.
\begin{theorem}
If $\max_{n\geq1}\big\|G_n(0)\big\|_{\mathfrak{L}^{1}(\mathcal{H}_{n})}<(2e^{3})^{-1}$,
then in case of bounded interaction potentials for $t\in\mathbb{R}$ a solution of the
Cauchy problem of the nonlinear quantum BBGKY hierarchy  (\ref{gBigfromDFBa})-(\ref{gBigfromDFBai})
is determined by expansion (\ref{sss}).
If $G_{n}(0)\in\mathfrak{L}^{1}_{0}(\mathcal{H}_{n})\subset\mathfrak{L}^{1}(\mathcal{H}_{n})$, it is a
strong (classical) solution and for arbitrary initial data $G_{n}(0)\in\mathfrak{L}^{1}(\mathcal{H}_{n})$
it is a weak (generalized) solution.
\end{theorem}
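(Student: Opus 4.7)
The plan is to assemble the theorem from the ingredients already developed in subsections 4.2 and 4.3 and to control the series \eqref{sss} using the estimates proved there.

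First I would handle \emph{well-definedness and the initial condition}. Estimate \eqref{UMapping} for the reduced cumulants $U_{1+n}(t)$ together with the convergence bound \eqref{solEstimate} show that under the smallness assumption $\mathfrak{c}\equiv\max_{n\geq1}\|G_n(0)\|_{\mathfrak{L}^{1}(\mathcal{H}_{n})}<(2e^{3})^{-1}$, the series \eqref{sss} converges absolutely in $\mathfrak{L}^{1}(\mathcal{H}_{s})$ uniformly on $\mathbb{R}$, so $G_s(t)\in\mathfrak{L}^{1}(\mathcal{H}_{s})$ for every $t$. The initial condition \eqref{gBigfromDFBai} is then exactly the identity already verified in subsection 4.3: since $\mathfrak{A}_{|\mathrm{P}|}(0)=I\delta_{|\mathrm{P}|,1}$, only the summand with $k=0$ and the trivial partition survives, yielding $U_{1+n}(0;\{Y\},s+1,\ldots,s+n\mid G(0))=G_{s+n}(0)\delta_{n,0}$.

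Next I would treat the \emph{strong solution} case. For $G_{n}(0)\in\mathfrak{L}^{1}_{0}(\mathcal{H}_{n})$ each correlation operator $g_n(0)=(e^{-\mathfrak{a}}G(0))_n$ lies in the domain of the von Neumann generator $\mathcal{N}$, so $t\mapsto\mathcal{G}(t\mid g(0))$ is a classical solution of \eqref{vNh}. Using the representation \eqref{nhg}, namely $\mathcal{U}(t\mid f)=e^{\mathfrak{a}}\mathcal{G}(t\mid e^{-\mathfrak{a}}f)$, and the group property of $\mathcal{U}(t\mid\cdot)$ proved in subsection 4.3, I would differentiate in the $\mathfrak{L}^1$-norm and obtain
\begin{equation*}
\frac{d}{dt}G_s(t,Y)=\bigl(e^{\mathfrak{a}}\mathcal{N}(\cdot\mid e^{-\mathfrak{a}}G(t))\bigr)_s(Y).
\end{equation*}
The componentwise computation of the right-hand side in subsection 4.3, which collapses all series-tail contributions (every term beyond $I_1+I_2$ vanishes for a two-body potential by the third-term identity $I_3=0$ and its obvious $n$-fold analogues), identifies this with the generator of \eqref{gBigfromDFBa}. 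Thus \eqref{sss} is a strong (classical) solution on $\mathfrak{L}^{1}_{0}(\mathcal{F}_\mathcal{H})$.

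Finally I would pass to the \emph{weak solution} for general $G_{n}(0)\in\mathfrak{L}^{1}(\mathcal{H}_{n})$ by density. Pick an approximating sequence $G^{(k)}(0)\in\mathfrak{L}^{1}_{0}(\mathcal{F}_\mathcal{H})$ with $G^{(k)}(0)\to G(0)$ in $\mathfrak{L}^{1}(\mathcal{F}_\mathcal{H})$ and satisfying the same smallness bound; the multilinear dependence of each $U_{1+n}(t)$ on $G(0)$ together with \eqref{UMapping} implies $\mathcal{U}(t\mid G^{(k)}(0))\to\mathcal{U}(t\mid G(0))$ in $\mathfrak{L}^{1}(\mathcal{H}_s)$ uniformly on compact intervals. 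Pairing \eqref{gBigfromDFBa} with an observable $A\in\mathfrak{L}(\mathcal{F}_\mathcal{H})$, integrating in time, and using that the operators $\mathcal{N}_s$ and $\mathcal{N}_{\mathrm{int}}$ act boundedly on trace-class operators when $\Phi$ is bounded (the commutator structure \eqref{komyt}, \eqref{oper Nint2} gives a bounded map on $\mathfrak{L}^1$ for bounded $\Phi$), the identity for $G^{(k)}$ passes to the limit and yields the weak form for $G$.

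The main obstacle is not any single estimate but a bookkeeping one: confirming that termwise differentiation of \eqref{sss} really reproduces the hierarchy in the form \eqref{gBigfromDFBa}, i.e.\ that the generator computation of subsection 4.3 is fully justified on $\mathfrak{L}^{1}_{0}(\mathcal{F}_\mathcal{H})$ and that all higher reduced-cumulant generators vanish identically for a two-body potential. Given that this collapse is already carried out above (through the $I_1$, $I_2$, $I_3=0$ identities), the remaining work is the routine density/continuity argument for the weak case.
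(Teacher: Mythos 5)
Your treatment of convergence, the initial condition, and the strong-solution case follows the paper's own route: the paper likewise uses estimate (\ref{UMapping})/(\ref{solEstimate}), the group property of the mapping (\ref{nhg}), and the componentwise generator computation (\ref{gennh}) (the $I_1$, $I_2$, $I_3=0$ collapse) to identify the time derivative of (\ref{sss}) with the right-hand side of (\ref{gBigfromDFBa}), first pointwise on vectors from $\mathcal{D}(H_s)$ and then in the $\mathfrak{L}^1$-norm for data in $\mathfrak{L}^{1}_{0}$. Where you genuinely diverge is the weak case: the paper does not approximate. It introduces the functional $(f,G(t))$ of (\ref{functional}) over finite sequences of degenerate operators with smooth compactly supported kernels, rewrites it as $(e^{\mathfrak{a}^{+}}f,\mathcal{G}(t\mid e^{-\mathfrak{a}}G(0)))$ as in (\ref{trf}), and differentiates directly by invoking the differentiability of the von Neumann-hierarchy group (\ref{rozvNh}) established in \cite{GP}, so that all operators end up acting on the test sequence $f$; this yields the weak identity for arbitrary $G_n(0)\in\mathfrak{L}^{1}(\mathcal{H}_n)$ in one step.

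Your density argument could in principle replace this, but as written it rests on a false claim: the operator $\mathcal{N}_{s}$ of (\ref{komyt}) is the commutator with the \emph{full} Hamiltonian $H_s$, which contains the unbounded kinetic part $K(i)=-\frac{\hbar^2}{2}\Delta_{q_i}$; it is therefore not a bounded map on $\mathfrak{L}^{1}(\mathcal{H}_s)$ even for bounded $\Phi$ --- only $\mathcal{N}_{\mathrm{int}}$ of (\ref{oper Nint2}) is. Consequently you cannot pass to the limit in the term $\mathrm{Tr}_{1,\ldots,s}\,f_s\,\mathcal{N}_s(Y)G^{(k)}_s(\tau)$ by trace-norm continuity of $\mathcal{N}_s$ applied to $G^{(k)}$. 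The repair is to formulate the weak identity with the generator transferred onto the smooth test observables (for degenerate $f_s$ with smooth compactly supported kernels, $\mathcal{N}_s f_s$ is a bounded observable, so the pairing is trace-norm continuous in $G$), after which your approximation $G^{(k)}(0)\to G(0)$ with preserved smallness and the multilinear continuity of the reduced cumulants (\ref{ssss}) does close the argument --- but note that transferring the operators onto $f$ is exactly what the paper's $\mathfrak{a}^{+}$-based computation accomplishes without any approximation step.
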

\begin{proof}
It will be recalled that according to estimate (\ref{solEstimate}), series (\ref{sss}) converges
under the condition: $\max_{n\geq1}\big\|G_n(0)\big\|_{\mathfrak{L}^{1}(\mathcal{H}_{n})}<(2e^{3})^{-1}$.
To prove that a strong solution of the nonlinear BBGKY hierarchy (\ref{gBigfromDFBa}) is given by
expansion (\ref{sss}) we first differentiate it over time variable in the sense of a pointwise
convergence on the space $\mathfrak{L}^{1}(\mathcal{H}_{n})$, i.e. for every function from the domain
$\psi_{s}\in\mathcal{D}(H_{s})\subset\mathcal{H}_{s}$. Taking into account the group property of mapping
(\ref{nhg}) generated by expansion (\ref{sss}) and properties of reduced nonlinear cumulants (\ref{ssss})
of groups of operators (\ref{rozvNh}), for
$G_{n}(0)\in\mathfrak{L}^{1}_{0}(\mathcal{H}_{n})\subset\mathfrak{L}^{1}(\mathcal{H}_{n}),\, n\geq1$,
we obtain
\begin{eqnarray}\label{pwcsh}
   &&\lim\limits_{\Delta t\rightarrow 0}\frac{1}{\Delta t}\big(\sum\limits_{n=0}^{\infty}\frac{1}{n!}
        \,\mathrm{Tr}_{s+1,\ldots,s+n}(U_{1+n}(\Delta t;\{Y\},X\setminus Y\mid G(t))-G_{s}(t,Y)\big)\psi_{s}=\\
   &&=\mathcal{N}(Y\mid G(t))\psi_{s}+
      \mathrm{Tr}_{s+1}\sum_{i\in Y}(-\mathcal{N}_{\mathrm{int}}(i,s+1))\big(G_{s+1}(t,Y,s+1)+\nonumber\\
   &&+\sum_{\mbox{\scriptsize$\begin{array}{c}\mathrm{P}:(Y,s+1)=X_1\bigcup X_2,\\i\in
      X_1;s+1\in X_2\end{array}$}}G_{|X_1|}(t,X_1)G_{|X_2|}(t,X_2)\big)\psi_{s},\nonumber
\end{eqnarray}
where expansion (\ref{sss}) is denoted by the symbol $G_{s}(t,Y)$. Since $G_{n}(0)\in\mathfrak{L}_{0}^{1}(\mathcal{H}_{n})\subset\mathfrak{L}^{1}(\mathcal{H}_{n}),\,n\geq1$,
then using equality (\ref{pwcsh}), in the sense of the norm convergence in
$\mathfrak{L}^{1}(\mathcal{H}_{n})$ we finally establish the validity of the equality
\begin{eqnarray*}
   &&\hskip-7mm\lim_{\Delta t\rightarrow 0}\mathrm{Tr}_{1,\ldots,s}
        \big|\frac{1}{\Delta t}\big(\sum\limits_{n=0}^{\infty}\frac{1}{n!}\,
        \mathrm{Tr}_{s+1,\ldots,s+n}U_{1+n}(t+\Delta t;\{Y\},X\setminus Y \mid G(0))-\\
   &&\hskip-7mm-\sum\limits_{n=0}^{\infty}\frac{1}{n!}\,
        \mathrm{Tr}_{s+1,\ldots,s+n}U_{1+n}(t;\{Y\},X\setminus Y\mid G(0))\big)-\\
   &&\hskip-7mm-\Big(\mathcal{N}(Y\mid G(t))+
        \mathrm{Tr}_{s+1}\sum_{i\in Y}(-\mathcal{N}_{\mathrm{int}}(i,s+1))\big(G_{s+1}(t,Y,s+1)+\\
   &&\hskip-7mm+\sum_{\mbox{\scriptsize$\begin{array}{c}\mathrm{P}:(Y,s+1)=X_1\bigcup X_2,\\i\in
      X_1;s+1\in X_2\end{array}$}}G_{|X_1|}(t,X_1)G_{|X_2|}(t,X_2)\big)\Big)\big|=0,
\end{eqnarray*}
which means that a strong solution of the nonlinear BBGKY hierarchy (\ref{gBigfromDFBa}) is given by
expansion (\ref{sss}) in case of initial data from the subspaces
$\mathfrak{L}^{1}_{0}(\mathcal{H}_{n})\subset\mathfrak{L}^{1}(\mathcal{H}_{n}),\, n\geq1$.

Let us give a sketch of the prove that in case of arbitrary initial data 
$G_{n}(0)\in\mathfrak{L}^{1}(\mathcal{H}_{n}),\,n\geq1$, expansion (\ref{sss}) is a weak solution of the 
initial-value problem (\ref{gBigfromDFBa})-(\ref{gBigfromDFBai}). To this end we introduce the functional
\begin{eqnarray}\label{functional}
   &&(f,G(t))\doteq\sum_{s=0}^{\infty}\frac{1}{s!}\,\mathrm{Tr}_{1,\ldots,s}f_s(Y)G_s(t,Y),
\end{eqnarray}
where $f=(0,f_1,\ldots,f_n,\ldots)\in\mathfrak{L}_{0}(\mathcal{F}_{\mathcal{H}})$ is the finite sequence
of degenerate bounded operators with infinitely times differentiable kernels with compact supports. For
$G_{n}(0)\in\mathfrak{L}^{1}(\mathcal{H}_{n})$ and $f_{n}\in\mathfrak{L}_{0}(\mathcal{H}_{n})$ functional
(\ref{functional}) exists.

We transform functional (\ref{functional}) to the following form
\begin{eqnarray}\label{trf}
  &&(f,G(t))=(f,e^\mathfrak{a}\mathcal{G}(t\mid e^\mathfrak{-a}G(0)))=
     (e^\mathfrak{a^{+}}f,\mathcal{G}(t\mid e^\mathfrak{-a}G(0))),
\end{eqnarray}
where the operator $\mathfrak{a}$ is defined by (\ref{a}) and on $f_s\in \mathfrak{L}_{0}(\mathcal{H}_{s})$
the operator $\mathfrak{a^{+}}$ is defined by the formula (an analog of the creation operator)
\begin{eqnarray*}\label{opercr}
   &&(\mathfrak{a}^{+}f)_{s}(Y)\doteq\sum_{j=1}^s\,f_{s-1}(Y\setminus(j)).
\end{eqnarray*}

To differentiate obtained functional (\ref{trf}) with respect to the time variable we use the corresponding
result \cite{GP} of the differentiation of group (\ref{rozvNh}) of the von Neumann hierarchy (\ref{vNh}).
As a result we derive that
\begin{eqnarray*}
   &&\frac{d}{dt}(f,G(t))=\sum_{s=0}^{\infty}\frac{1}{s!}\,\mathrm{Tr}_{1,\ldots,s}
     \big(\mathcal{N}_{s}(Y)(e^\mathfrak{a^+}f)_s(Y)\mathcal{G}(t,Y\mid e^\mathfrak{-a}G(0))+\\
   &&+\sum\limits_{\mathrm{P}:Y=X_{1}\bigcup X_2}\sum\limits_{i_{1}\in X_{1}}\sum\limits_{i_{2}\in X_{2}}
     \mathcal{N}_{\mathrm{int}}(i_{1},i_{2})(e^\mathfrak{a^+}f)_s(Y)
     \mathcal{G}(t,X_{1}\mid e^\mathfrak{-a}G(0))\mathcal{G}(t,X_{2}\mid e^\mathfrak{-a}G(0))\big).
\end{eqnarray*}
Taking into account the structure of expansion (\ref{sss}) of the nonlinear quantum BBGKY hierarchy solution,
for $f_s\in \mathfrak{L}_{0}(\mathcal{H}_{s}),\, s\geq1$, the following equality holds
\begin{eqnarray*}\label{wh}
   &&\hskip-8mm\frac{d}{dt}(f,G(t))=\sum_{s=0}^{\infty}\frac{1}{s!}\,\mathrm{Tr}_{1,\ldots,s}
      \Big(\big(\mathcal{N}_{s}(Y)f_s(Y)+\sum_{\mbox{\scriptsize$\begin{array}{c}i,j=1\\i\neq j\end{array}$}}^{s}
      \mathcal{N}_{\mathrm{int}}(i,j)f_{s-1}(Y\setminus(j))\big)G_{s}(t,Y)+\\
   &&\hskip-8mm+\sum\limits_{\mathrm{P}:Y=X_{1}\bigcup X_2}\sum\limits_{i_{1}\in X_{1}}
      \sum\limits_{i_{1}\in X_{1}}\sum\limits_{i_{2}\in X_{2}}
      \mathcal{N}_{\mathrm{int}}(i_{1},i_{2})f_s(Y)G_{|X_{1}|}(t,X_{1})G_{|X_{2}|}(t,X_{2})+\nonumber\\
   &&\hskip-8mm+\sum_{\mbox{\scriptsize$\begin{array}{c}i,j=1\\i\neq j\end{array}$}}^{s}
      \mathcal{N}_{\mathrm{int}}(i,j)f_{s-1}(Y\setminus(j))\sum_{\mbox{\scriptsize$\begin{array}{c}\mathrm{P}:Y=
      X_1\bigcup X_2\\i\in X_1;j\in X_2\end{array}$}}G_{|X_1|}(t,X_1)G_{|X_2|}(t,X_2)\big)\Big).\nonumber
\end{eqnarray*}
This equation means that in case of arbitrary initial data $G_{n}(0)\in\mathfrak{L}^{1}(\mathcal{H}_{n}),\,n\geq1$,
a weak solution of the initial-value problem (\ref{gBigfromDFBa})-(\ref{gBigfromDFBai}) is given by expansion (\ref{sss}).

\end{proof}

\subsection{Remark: the nonlinear Vlasov hierarchy}
We give comments on the mean field asymptotic behavior \cite{Sp80} of constructed
solution (\ref{sss}).

Let us suppose the existence of the mean field limit of initial state in the following
sense
\begin{eqnarray}\label{asic}
   &&\lim\limits_{\epsilon\rightarrow 0}\big\|\epsilon^{s}G_{n}(0)-
      g_{n}(0)\big\|_{\mathfrak{L}^{1}(\mathcal{H}_n)}=0, \quad n\geq1.
\end{eqnarray}
Then there exists the mean field limit $g_s(t,1,\ldots,s),\,s\geq1$, of marginal correlation
operators (\ref{sss})
\begin{eqnarray*}\label{asymp}
   &&\lim\limits_{\epsilon\rightarrow 0}\big\|\epsilon^{s}G_{s}(t)-
      g_{s}(t)\big\|_{\mathfrak{L}^{1}(\mathcal{H}_s)}=0, \quad s\geq1,
\end{eqnarray*}
which is governed by the nonlinear Vlasov quantum hierarchy
\begin{eqnarray}\label{gBigfromDFBa_lim}
   &&\frac{d}{dt}g_s(t,Y)=\sum_{i\in Y}(-\mathcal{N}(i))g_{s}(t,Y) +
      \mathrm{Tr}_{s+1}\sum_{i\in Y}(-\mathcal{N}_{\mathrm{int}}(i,s+1))\times\\
   &&\times\big(g_{s+1}(t,Y,s+1)+\sum_{\mbox{\scriptsize
      $\begin{array}{c}\mathrm{P}:(Y,s+1)=X_1\bigcup X_2,\\i\in X_1;s+1\in X_2\end{array}$}}
      g_{|X_1|}(t,X_1)g_{|X_2|}(t,X_2)\big),\quad s\geq1, \nonumber
\end{eqnarray}
where notations similar to hierarhy (\ref{gBigfromDFBa}) are used.

If initial data satisfies chaos property, then we establish
\begin{eqnarray}\label{Gcid}
   &&\lim\limits_{\epsilon\rightarrow 0}\big\|\epsilon^{s}G_{s}(t)
      \big\|_{\mathfrak{L}^{1}(\mathcal{H}_s)}=0,\quad s\geq2,
\end{eqnarray}
since solution expansions (\ref{GUG(0)}) for marginal correlation operators are defined by the
$(s+n)th$-order cumulants as contrasted to solution expansions (\ref{RozvBBGKY}) for marginal
density operators defined by the $(1+n)th$-order cumulants and in the consequence of the
following formula on an asymptotic perturbation of cumulants of groups of operators \cite{Kato}
\begin{eqnarray*}
   &&\lim\limits_{\epsilon\rightarrow0}\big\|\frac{1}{\epsilon^{n}}\,
     \mathfrak{A}_{s+n}(t,1,\ldots,s+n)f_{s+n}\big\|_{\mathfrak{L}^{1}(\mathcal{H}_{s+n})}=0.
\end{eqnarray*}
In case of $s=1$ provided that (\ref{asic}) we have
\begin{eqnarray*}
   &&\lim\limits_{\epsilon\rightarrow 0}\big\|\epsilon G_{1}(t)-
     g_{1}(t)\big\|_{\mathfrak{L}^{1}(\mathcal{H})}=0,
\end{eqnarray*}
where for finite time interval the limit one-particle marginal correlation operator $g_1(t,1)$
is given by the norm convergent on the space $\mathfrak{L}^{1}(\mathcal{H})$ series
\begin{eqnarray}\label{1mco}
   &&\hskip-10mm g_{1}(t,1)=\\
   &&\hskip-10mm =\sum\limits_{n=0}^{\infty}\int\limits_0^tdt_{1}\ldots
      \int\limits_0^{t_{n-1}}dt_{n}\,\mathrm{Tr}_{2,\ldots,n+1}\mathcal{G}_{1}(-t+t_{1},1)
      (-\mathcal{N}_{\mathrm{int}}(1,2))\prod\limits_{j_1=1}^{2}
      \mathcal{G}_{1}(-t_{1}+t_{2},j_1)\ldots\nonumber\\
   &&\hskip-10mm \ldots\prod\limits_{i_{n}=1}^{n}\mathcal{G}_{1}(-t_{n}+t_{n},i_{n})
      \sum\limits_{k_{n}=1}^{n}(-\mathcal{N}_{\mathrm{int}}(k_{n},n+1))\prod\limits_{j_n=1}^{n+1}
      \mathcal{G}_{1}(-t_{n},j_n)\prod\limits_{i=1}^{n+1}g_1(0,i),\nonumber
\end{eqnarray}
which obviously coincides with iteration series of the Vlasov quantum kinetic equation \cite{G11}.
For bounded interaction potential (\ref{H_n}) series (\ref{1mco}) is norm convergent on the space
$\mathfrak{L}^{1}(\mathcal{H})$ under the condition: $t<t_0\equiv\big(2\,
\|\Phi\|_{\mathfrak{L}(\mathcal{H}_{2})}\|g_1(0)\|_{\mathfrak{L}^{1}(\mathcal{H})}\big)^{-1}$.

In view of the validity of limit (\ref{Gcid}) from the Vlasov nonlinear quantum hierarchy
(\ref{gBigfromDFBa_lim}) we also conclude that limit one-particle marginal correlation operator
(\ref{1mco}) is governed by the Cauchy problem of the Vlasov quantum kinetic equation
\begin{eqnarray}\label{Vlasov1}
  &&\frac{d}{dt}g_{1}(t,1)=-\mathcal{N}(1)g_{1}(t,1)+
     \mathrm{Tr}_{2}(-\mathcal{N}_{\mathrm{int}}(1,2))g_{1}(t,1)g_{1}(t,2),
\end{eqnarray}
and consequently for pure states we derive the Hartree equation.

Thus, the nonlinear Vlasov quantum hierarchy (\ref{gBigfromDFBa_lim}) describes the evolution
of initial correlations.


\section{Conclusion}
In the paper the origin of the microscopic description of non-equilibrium correlations of
quantum many-particle systems obeying the Maxwell-Boltzmann statistics has been considered.
The nonlinear quantum BBGKY hierarchy (\ref{gBigfromDFBa}) for marginal correlation operators
was introduced. It gives an alternative approach to the description of the state evolution
of quantum infinite-particle systems in comparison with quantum BBGKY hierarchy for marginal
density operators \cite{BogLect,Pe95}. The evolution of both finitely and infinitely many
quantum particles is described by initial-value problem of the nonlinear quantum BBGKY hierarchy
(\ref{gBigfromDFBa}) and in case of finitely many particles the nonlinear quantum BBGKY hierarchy
is equivalent to the von Neumann hierarchy (\ref{vNh}).

A nonperturbative solution of the nonlinear quantum BBGKY hierarchy is constructed in the form
of expansion (\ref{sss}) over particle clusters which evolution is governed by corresponding-order
cumulant (\ref{ssss}) of the nonlinear groups of operators generated by solution (\ref{rozvNh}) of
the von Neumann hierarchy (\ref{vNh}). We established that in case of absence of correlations at
initial time the correlations generated by the dynamics of quantum many-particle systems (\ref{GUG(0)})
are completely determined by cumulants (\ref{cumulantP}) of groups of operators (\ref{groupG}).

Thus, the cumulant structure of solution (\ref{rozvNh}) of the von Neumann hierarchy (\ref{vNh})
induces the cumulant structure of solution expansion (\ref{sss}) of initial-value problem of the
nonlinear quantum BBGKY hierarchy (\ref{gBigfromDFBa}).

We emphasize that intensional Banach spaces for the description of states of infinite-particle
systems, which are suitable for the description of the kinetic evolution or equilibrium states,
are different from the exploit spaces \cite{Pe95},\cite{CGP97}. Therefore marginal correlation
operators from the space of trace-class operators describe finitely many quantum particles. In
order to describe the evolution of infinitely many particles we have to construct solutions for
initial data from more general Banach spaces than the space of sequences of trace-class operators.
For example, it can be the space of sequences of bounded translation invariant operators which
contains the marginal density operators of equilibrium states \cite{Gen}. In that case every term
of the solution expansion of the nonlinear quantum BBGKY hierarchy (\ref{sss}) contains the divergent
traces, which can be renormalized due to the cumulant structure of solution expansion (\ref{ssss}).

The mean field asymptotic behavior of constructed solution (\ref{sss}) is governed by the nonlinear
Vlasov quantum hierarchy (\ref{gBigfromDFBa_lim}). In such approximation this hierarchy describes
the evolution of initial correlations and in case of its absence the nonlinear Vlasov hierarchy
(\ref{gBigfromDFBa_lim}) is equivalent to the Vlasov quantum kinetic equation (\ref{Vlasov1}).

Following to the paper \cite{GP} the obtained results can be also generalized on many-particle
systems obeying the Fermi-Dirac and Bose-Einstein statistics (\ref{eqG}).

\bigskip

\addcontentsline{toc}{section}{References}
\renewcommand{\refname}{References}

\end{document}